\newtheorem{theorem}{Theorem}
\newtheorem{lemma}{Lemma}
\newtheorem{definition}{Definition}
\newcommand{\T}{\top}
\newcommand{\tabincell}[2]{\begin{tabular}{@{}#1@{}}#2\end{tabular}}
\begin{document}

	\title{Autonomous Tracking and State Estimation with Generalised Group Lasso}
	\author{Rui~Gao,
		Simo~S\"arkk\"a,
		Rub\'en~Claveria-Vega,	
		and Simon~Godsill
		\thanks{R.~Gao and S.~S\"arkk\"a are with the Department of Electrical Engineering and Automation, Aalto University, Espoo, 02150 Finland (e-mail:rui.gao@aalto.fi; simo.sarkka@aalto.fi).
			R.~Claveria-Vega and S.~Godsill are with the Department of Engineering, University of Cambridge, CB2 1PZ, UK (e-mail:rmc83@cam.ac.uk; sjg@eng.cam.ac.uk).
			This work was supported by Academy of Finland.
		}
	}

	\markboth{Journal of \LaTeX\ Class Files,~Vol.~14, No.~8, August~2015}
	{Shell \MakeLowercase{\textit{et al.}}: Bare Demo of IEEEtran.cls for IEEE Journals}

	\maketitle

	\begin{abstract}
		We address the problem of autonomous tracking and state estimation for marine vessels, autonomous vehicles, and other dynamic signals under a (structured) sparsity assumption. The aim is to improve the tracking and estimation accuracy with respect to classical Bayesian filters and smoothers. We formulate the estimation problem as a dynamic generalised group Lasso problem and develop a class of smoothing-and-splitting methods to solve it. The Levenberg--Marquardt iterated extended Kalman smoother-based multi-block alternating direction method of multipliers (LM-IEKS-mADMM) algorithms are based on the alternating direction method of multipliers (ADMM) framework. 
		This leads to minimisation subproblems with an inherent structure to which three new augmented recursive smoothers are applied.
		Our methods can deal with large-scale problems without pre-processing for dimensionality reduction. Moreover, the methods allow one to solve nonsmooth nonconvex optimisation problems. We then prove that under mild conditions, the proposed methods converge to a stationary point of the optimisation problem. By simulated and real-data experiments including multi-sensor range measurement problems, marine vessel tracking, autonomous vehicle tracking, and audio signal restoration, we show the practical effectiveness of the proposed methods.
	\end{abstract}

	\begin{IEEEkeywords}
		Autonomous tracking, state estimation, Kalman smoother, alternating direction method of multipliers (ADMM), sparsity, group Lasso.
	\end{IEEEkeywords}

	\IEEEpeerreviewmaketitle

	\section{Introduction}
	\IEEEPARstart{A}{utonomous} tracking and state estimation problems
	are active research topics with many real-world applications, including intelligent maritime navigation, autonomous vehicle tracking, and audio signal estimation \cite{simo2013Bayesian,Dornaika2009tracking,Rogez2014Monocular,Godsill1998Digital,Ahmad2016Intent}. The aim is to autonomously estimate and track the state (e.g., position, velocity, or direction) of the dynamic system using imperfect measurements \cite{Bar-Shalom+Li+Kirubarajan:2001}. A frequently used approach for autonomous tracking and estimation problems is based on Bayesian filtering and smoothing. When the target dynamics and observation models are linear and Gaussian, Kalman smoother (KS) \cite{simo2013Bayesian,RTS1965} provides the optimal Bayesian solution which coincides with the optimal minimum mean square error estimator in that case. In the case of nonlinear dynamic systems, the iterated extended Kalman smoother (IEKS) \cite{Bell1994smoother,Barfoot2017State,simo2020LMIEKS} makes use of local affine approximations by means of Taylor series for the nonlinear functions, and then iteratively carries out KS. Sigma-point based smoothing methods \cite{simo2008RTS,Garcia:2017} employ sigma-points to approximate the probability density of the states, which can preserve higher-order accuracy than IEKS. Random sampling-based filters such as particle filters \cite{Doucet+Godsill+Andrieu:2000,Doucet+Freitas+Gordon:2001,simo2013Bayesian,Seifzadeh2015particle} can be used to deal with nonlinear tracking situations involving potentially arbitrary nonlinearities, noise, and constraints.
	Although these trackers and estimators are capable of utilising the measurement information to obtain the estimates, they ignore sparsity dictated by physical attributes of dynamic systems.
	
	The motivation for our work comes from the following real-world applications. One significant application is \textit{marine vessel tracking} \cite{Ahmad2016Intent,Bar-Shalom+Li+Kirubarajan:2001}. Vessels are frequently pitching and rolling on the surface of the ocean, which can be modelled as sparsity in the process noise. Our methodology is also applicable to \textit{autonomous vehicle tracking}, which enables a vehicle to autonomously avoid obstacles and maintain safe distances to other vehicles. In presence of many sudden stops (i.e., velocities are zero), the tracking accuracy can be improved by employing sparsity \cite{Gao2019ieks}. Other examples of tracked targets include robots \cite{Barfoot2017State} and unmanned aerial vehicles \cite{Tisdale2009Autonomous}. Another practical application is \textit{audio signal restoration}, where typically only a few time-frequency elements are expected to be present, and thus, sparsity is an advisable assumption \cite{fevotte2007sparse}. 
	For example, Gabor synthesis representation with sparsity constraints has been proved to be suitable for audio restoration \cite{wolfe2004bayesian}.  
	Similar problems can also be found in electrocardiogram (ECG) signal analysis \cite{zhao2018spectro} and automatic music transcription \cite{Gao2019parameter}. Hence, computationally effective sparsity modelling methods are in demand.
	
	Since sparsity may improve the tracking and estimation performance, there is a growing literature that proposes sparse regularisers such as Lasso (i.e., least absolute shrinkage and selection operator, or $L_1$-regularisation) \cite{Tibshirani1996lasso,Meier2008groupLasso} or total variation (TV)~\cite{Rudin1992tv,Gao2018combined} for these applications. Existing methods for sparse tracking and estimation can be split in two broad categories: robust smoothing approaches and optimisation-based approaches. 
	The former approaches merge filtering and smoothing with $L_1$-regularisation. For instance, the modified recursive filter-based compressive sensing methods were developed in \cite{Carmi2010pseudo,Vaswani2008compressed,Recursive2016Recovery,Yoo2020Estimation}. An $L_1$-Laplace robust Kalman smoother was presented in \cite{Aravkin2011Laplace}. Using both sparsity and dynamics information, a sparse Bayesian learning framework was proposed in \cite{Shaughnessy2020Sparse}.
	The latter approaches formulate the whole tracking and state estimation problem as an $L_1$-penalised minimisation problem, and then apply iterative algorithms to solve the minimisation problem \cite{Aravkin2017Generalized,Charles2016Dynamic,Gao2019ieks,Simonetto2017Prediction,Farahmand2011sparsity,Gao2020sigma}. While these $L_1$-penalised estimators offer several benefits, they penalise individual elements of the state vector or process noise instead of groups of elements in them.	
	
	Recently, there have been important advances in the structured sparsity methodology for collision avoidance \cite{park2019transformation,Peng2020Output-Feedback}, path-following and tracking \cite{McCall2016Integral,Rout2020Sideslip-Compensated}, and 
	visual tracking \cite{Wu2017Exploiting, Bai2015learning}. In \cite{Du2016Discriminative}, a discriminative supervised hashing method was proposed for object tracking tasks. An adaptive elastic echo state network was developed for multivariate time series prediction in \cite{Xu2016Adaptive}. The work in \cite{Cheng2017Object} formulated a moving tracking problem with $L_2$-norm constraints and then introduced a temporal consistency dictionary learning algorithm. However, the methods lack strong convergence and performance guarantees, particularly when the objective becomes nonconvex. Moreover, relatively few methods exist for incorporating structured sparsity into autonomous tracking and state estimation problems. Taking these developments into consideration, the main goal here is to develop new efficient methods for regularised autonomous tracking and estimation problems, which allow for group Lasso type of sparseness assumptions on groups of state or process noise elements. 
	
	When we formulate a regularised autonomous tracking and state estimation problem as a generalised $L_2$-minimisation problem (also called as dynamic generalised group Lasso problem), the resulting problem is difficult to solve due to its nonsmoothness and/or nonconvexity. Splitting-based optimisation methods \cite{Wright2006Numerical,Boyd2011admm,Splitting2017book} such as multi-block alternating direction method of multipliers (ADMM) \cite{Boyd2011admm}, are methods that can tackle this kind of problems. One advantage of these methods is that they decompose the original problem into a sequence of easier subproblems. Although these methods can directly work on the original optimisation problem, such direct use ignores the inherent structure induced by the implied Markovian structure in the optimisation problem. In this paper, we propose a class of efficient smoothing-and-splitting methods that outperform the classical optimisation methods in terms of computational time due to the leveraging of the Markovian structure. 
	
	In this paper, we focus on autonomous tracking and state estimation problems with sparsity-inducing priors. Our first contribution is to provide a flexible formulation of dynamic generalised group Lasso problems arising in autonomous tracking and state estimation. Special cases of the formulation are Lasso, isotropic TV, anisotropic TV, fused Lasso, group Lasso, and sparse group Lasso. Meanwhile, the formulation can cope with sparsity on the process noise or the state in dynamic systems. Since the resulting optimisation problems are nonsmooth, possibly nonconvex, and large-dimensional, our second contribution is to provide a class of the smoothing-and-splitting methods to address them. We develop the new KS-mADMM, Gauss--Newton IEKS-mADMM (GN-IEKS-mADMM), and Levenberg--Marquardt IEKS-mADMM (LM-IEKS-mADMM) methods which use augmented recursive smoothers to solve the primal subproblems in the mADMM iterations. As a third contribution, we prove that under mild conditions, the proposed methods converge to a stationary point. Our fourth contribution is to apply the proposed methods to real-world applications of marine vessel tracking, autonomous vehicle tracking, and audio signal restoration.
	
	The rest of this paper is structured as follows. In Section~\ref{sec:formulation}, we formulate the sparse autonomous tracking and state estimation problem as a generalised $L_2$-minimisation problem. Particularly, we present a broad class of regulariser configurations parametrised by sets of matrices and vectors. We introduce the batch tracking and estimation methods in Section~\ref{sec:batch} and present three augmented recursive smoothing methods in Section~\ref{sec:proposed}. In Section~\ref{sec:convergence}, we establish the convergence. In Section~\ref{sec:results}, we report numerical results on simulated and real-life datasets. Section~\ref{sec:Conclusion} draws the concluding remarks.
	
	The notation is as follows. Matrices $\mathbf{X}$ and vectors $\mathbf{x}$ are indicated in boldface. $(\cdot)^{\T}$ represents transposition, and $(\cdot)^{-1}$ represents matrix inversion. The $\mathbf{R}$-weighted norm of $\mathbf{x}$ is denoted by $\|\mathbf{x}\|_{\mathbf{R}}= \sqrt{\mathbf{x}^\T \mathbf{R} \mathbf{x}}$. $\left \|\mathbf{x} \right \|_1 = \sum |x_i|$ denotes $L_1$-norm, and $\left \|\mathbf{x}\right \|_2 = \sqrt{\sum_i x_i^2}$ denotes $L_2$-norm. $\mathbf{X}_{g,t}$ is the $(g,t)$:th element of matrix $\mathbf{X}$, and $\mathbf{x}^{(k)}$ denotes the value of $\mathbf{x}$ at $k$:th iteration. 
	$\operatorname{vec}(\cdot)$ represents a vectorisation operator, $\operatorname{diag}(\cdot)$ represents a block diagonal matrix operator with the elements in its argument on the diagonal,  and $\mathbf{x}_{1:T} =\operatorname{vec}(\mathbf{x}_1,\ldots,\mathbf{x}_T)$. 
	$\partial \phi(\mathbf{x})$ denotes a sub-gradient of $\phi$. $\mathbf{J}_{\phi}$ is the Jacobian of ${\phi}(\mathbf{x})$. $\delta_{+}(\mathbf{A})$ denotes the smallest eigenvalue of $\mathbf{A}$. $p(\mathbf{x})$ denotes probability density function (pdf) of $\mathbf{x}$ and $\mathcal{N} (\mathbf{x}  \mid \mathbf{m}, \mathbf{P})$ denotes a Gaussian probability density function with mean $\mathbf{m}$ and covariance $\mathbf{P}$ evaluated at $\mathbf{x}$.

	\section{Problem Statement}
	\label{sec:formulation}
	
	Let $\mathbf{y}_{t} \in \mathbb{R}^{N_y}$ be a measurement of a dynamic system and $\mathbf{x}_{t} \in \mathbb{R}^{N_x}$ be an unknown state (sometimes called the source or signal). The state and measurement are related according to a dynamic state-space model of the form
	\begin{equation}\label{eq:model}
		\begin{split}
			\mathbf{x}_{t} =\mathbf{a}_t(\mathbf{x}_{t-1}) + \mathbf{q}_t,  \quad 
			\mathbf{y}_{t} =\mathbf{h}_t(\mathbf{x}_t) +\mathbf{r}_t, 
		\end{split}
	\end{equation}
	where $\mathbf{h}_t: \mathbb{R}^{N_x} \to \mathbb{R}^{N_y}$ and $\mathbf{a}_t: \mathbb{R}^{N_x} \to \mathbb{R}^{N_x}$ are the measurement and state transition functions, respectively, and $t = 1,\ldots,T$ is the time step number. The process and measurement noises $\mathbf{q}_t \sim \mathcal{N}(\mathbf{0},\mathbf{Q}_t)$ and $\mathbf{r}_t \sim \mathcal{N}(\mathbf{0},\mathbf{R}_t)$ are assumed to be zero-mean Gaussian with covariances $\mathbf{Q}_t$ and $\mathbf{R}_t$, respectively. The initial condition at $t=1$ is given by $\mathbf{x}_1\sim \mathcal{N}(\mathbf{m}_1,\mathbf{P}_1)$. A particular special case of \eqref{eq:model} is an affine Gaussian model by
	\begin{equation}\label{eq:linear_model}
		\begin{aligned}
			\mathbf{a}_{t}(\mathbf{x}_{t-1}) =\mathbf{A}_t\,\mathbf{x}_{t-1} + \mathbf{b}_t, \quad
			\mathbf{h}_{t}(\mathbf{x}_{t}) =\mathbf{H}_t\,\mathbf{x}_t +\mathbf{e}_t, 
		\end{aligned}
	\end{equation}
	where $\mathbf{A}_t  \in \mathbb{R}^{N_x \times N_x}$, $\mathbf{H}_t \in \mathbb{R}^{N_y \times N_x}$ are the transition and measurement matrices, and $\mathbf{b}_t$, $\mathbf{e}_t$ are bias terms.

	The goal here is to obtain the ``best estimate'' of $\mathbf{x}_{1:T}$ from imperfect measurements $\mathbf{y}_{1:T}$. 	
	For computing $\mathbf{x}_{1:T}$ with sparsity-inducing priors, we define a set of matrices $\left\{\mathbf{G}_{g,t}  \in \mathbb{R}^{P_g \times N_x} \mid g = 1, \ldots, N_g\right\}$, matrices $\mathbf{B}_t$, and vectors $\mathbf{d}_t$, for $t=1,\ldots,T$, and impose sparsity on the groups of elements of the state or the process noise. Mathematically, the problem of computing the state estimate $\mathbf{x}^\star_{1:{T}}$ is formulated as 
	\begin{equation}\label{eq:whole_optimisation}
		\begin{split}
			\begin{aligned}
				&\mathbf{x}^\star_{1:{T}} = \arg\min_{\mathbf{x}_{1:{T}}}
				\frac{1}{2} \sum_{t=1}^{T}  \left\| \mathbf{y}_t - \mathbf{h}_t(\mathbf{x}_t)\right\|_{\mathbf{R}_t^{-1}}^2  \\
				& \quad + \frac{1}{2} \sum_{t=2}^{T}  \left\| \mathbf{x}_t - \mathbf{a}_t(\mathbf{x}_{t-1})\right\|_{\mathbf{Q}_t^{-1}}^2  
				+\frac{1}{2} \left\| \mathbf{x}_1  -   \mathbf{m}_1 \right\|_{ \mathbf{P}_1^{-1} }^2 \\
				& \qquad +   \sum_{t=1}^{T}\sum_{g = 1}^{N_g} \mu \left \| \mathbf{G}_{g,t} \left(\mathbf{x}_t - \mathbf{B}_t \, \mathbf{x}_{t-1} - \mathbf{d}_t\right) \right \|_2, \\
			\end{aligned}
		\end{split}
	\end{equation}
	where $\mu > 0$ is a penalty parameter. 
	
	A merit of our formulation is its flexibility, because the selections of $\mathbf{G}_{g,t}$, $\mathbf{B}_{t}$, and $\mathbf{d}_t$ can be adjusted to represent different regularisers. With matrix $ \mathbf{G}_{g,t}$, the formulation~\eqref{eq:whole_optimisation} accommodates a large class of sparsity-promoting regularisers (e.g., Lasso, isotopic TV, anisotopic TV, fused Lasso, group Lasso, and sparse group Lasso). A list of such regularisers is reported in Table~\ref{tab:penalty}. Meanwhile, the formulation~\eqref{eq:whole_optimisation} also allows for putting sparsity assumptions on the state or the process noise by different selections of $\mathbf{B}_t$ and $\mathbf{d}_t$ (see Table~\ref{tab:selection_sparsity}).	
	\setlength\tabcolsep{2 pt}
	\begin{table}[!tbh]
		\caption{Examples of sparsity-promoting regularisers that are included in the present framework.}
		\begin{center}
			\begin{tabular}{|l|l|}
				\hline
				\textbf{Regularisation}	  &  $\mathbf{G}_{g,t} $ \textbf{descriptions} 
				\\ \hline
				$L_2$-regularisation   &  
				$\mathbf{G}_{g,t}$ is an identity matrix
				\\ \hline
				Lasso          &    \tabincell{l}{  $N_g = N_x$, $P_g = 1$ for all $g$, \\
					$\mathbf{G}_{g,t}$ has $1$ at $g$:th column and zeros otherwise.}
				\\  \hline  
				Isotopic TV    &  \tabincell{l}{ $N_g = 1$, $P_1 = N_x-1$ \\ 
					$ \mathbf{G}_{1,t}$ encodes a finite difference operator.}   
				\\  \hline
				Anisotopic TV   &  \tabincell{l}{$ \mathbf{G}_{g,t}$ encodes the $g$:th row of \\
					a finite difference operator. }   
				\\  \hline
				Fused Lasso  &   \tabincell{l}{$g = 1,\ldots, N_x$, $P_g = 1$ for all $g$,\\ 
					$\mathbf{G}_{g,t}$ has $1$ at $g$:th column and zeros otherwise; \\ 
					$g = N_x + 1,\ldots, N_g$,
					$ \mathbf{G}_{g,t}$ encodes the $g$:th row of \\
					a finite difference operator. } 
				\\ \hline
				Group Lasso  &   \tabincell{l}{ $ \mathbf{G}_{g,t}$ has $1$, corresponding to the selected elements \\ of $\mathbf{x}_{t}$ in the group and zeros otherwise. }   
				\\ \hline
				Sparse group Lasso & \tabincell{l}{$g = 1,\ldots, N_x$, $P_g = 1$,\\ 
					$\mathbf{G}_{g,t}$ has $1$ at $g$:th column and zeros otherwise;  \\      
					$g = N_x+1,\ldots, N_g$, \\
					$ \mathbf{G}_{g,t}$ has the same setting with group Lasso.}  
				\\ \hline
			\end{tabular}\label{tab:penalty}
		\end{center}
	\end{table}
	
	\setlength\tabcolsep{2pt}
	\begin{table}[!tbh]
		\caption{Flexible sparsity assumptions by selecting $\mathbf{B}_{t} $ and $\mathbf{d}_{t} $.}
		\begin{center}
			\begin{tabular}{|l|l|l|}
				\hline
				Dynamic systems  & Sparsity on:  & $\mathbf{B}_{t} $ and $\mathbf{d}_{t} $   \\ \hline
				\multirow{2}{*}{Affine Gaussian}    & state      &  
				Settings on $\mathbf{B}_t = \mathbf{0}$, $\mathbf{d}_t =  \mathbf{0}$                                                                               \\ \cline{2-3} 
				& process noise & 	$\mathbf{B}_t = \mathbf{A}_t$, $\mathbf{d}_t =  \mathbf{b}_t$                                                                                  \\ \hline
				\multirow{2}{*}{ Nonlinear} & state      & 			$\mathbf{B}_t = \mathbf{0}$, 
				$\mathbf{d}_t= \mathbf{0}$                                                                                  \\ \cline{2-3} 
				& process noise  &   \tabincell{l}{$\mathbf{B}_{t} \mathbf{x}_{t-1} \!+\! \mathbf{d}_t$ as the affine approximation \\
					of $\mathbf{a}_t(\mathbf{x}_{t-1})$ (see Section \ref{sec:nonlinear_ieks})}                                                                              \\ \hline
			\end{tabular}
			\label{tab:selection_sparsity}
		\end{center}
	\end{table}
	
	A simple, yet illustrative, example can be found in autonomous vehicle tracking. When there are stop-and-go points (e.g. vehicle stops) in the data, the zero-velocity and zero-angle values at those time points can be grouped together via the $L_2$-norm and $\mathbf{G}_{g,t}$. That means three elements can be forced to be equal to zero at the same time. Another application is in audio restoration, where the matrices $\mathbf{G}_{g,t}$ are defined so that only two elements of the state $\mathbf{x}_t$ -- corresponding to the real and imaginary parts of a synthesis coefficient -- are extracted at a time step. Thus, these pairs, which are associated with the same time-frequency basis functions, tend to be non-zero or zero together.
	
	The problem \eqref{eq:whole_optimisation} is more difficult to solve than the common $L_2$-minimisation problem (which corresponds to $ \mathbf{G}_{g,t} = \mathbf{I}$, where $\mathbf{I}$ is an identity matrix) or the squared $L_2$-minimisation problem (the problem with $\left \| \mathbf{G}_{g,t} (\cdot) \right \|_2^2$), since the penalty term $\left \| \mathbf{G}_{g,t} (\cdot) \right \|_2$ is nonsmooth. Furthermore, $\mathbf{G}_{g,t}$ is possibly rank-deficient matrix. In this paper, we first derive batch tracking and estimation methods, which are based on the batch computation of the state sequence. To speed up the batch methods, we then propose augmented recursive smoother methods for the primal variable update.

	\section{Batch Tracking and Estimation Methods}
	\label{sec:batch}
	In this section, we introduce the multi-block ADMM framework. Based on this framework, we derive batch algorithms for solving the regularised tracking and state estimation problem.
	
	\subsection{The General Multi-block ADMM (mADMM) Framework}
	\label{sec:admm_framework}
	
	The methods that we develop are based on the multi-block ADMM~\cite{Boyd2011admm}. The multi-block ADMM provides an algorithmic framework which is applicable to problems of the form \eqref{eq:whole_optimisation}, and it can be instantiated by defining the auxiliary variables and their update steps. We introduce auxiliary variables $\mathbf{v}_{t}$ and $\mathbf{w}_{g,t}$, $g=1,\ldots,N_g$, $t=1,\ldots,T$, and then build the following constraints
	\begin{equation}\label{eq:constraint_condition} 
		\begin{aligned}
			\mathbf{x}_t - \mathbf{B}_t \, \mathbf{x}_{t-1} - \mathbf{d}_t &= \mathbf{v}_t, \\
			\mathbf{w}_{1,t} &= \mathbf{G}_{1,t} \,\mathbf{v}_t, \\
			&\,\,\, \vdots   \\
			\mathbf{w}_{N_g,t} &= \mathbf{G}_{N_g,t} \, \mathbf{v}_t.
		\end{aligned}
	\end{equation}
	Note that in \eqref{eq:constraint_condition} we could alternatively introduce auxiliary variables ${\mathbf{w}}_{g,t} = \mathbf{G}_{g,t}(\mathbf{x}_t - \mathbf{B}_t \, \mathbf{x}_{t-1} - \mathbf{d}_t)$, but this replacement would require $\mathbf{G}_{g,t}$ to be invertible when using the augmented recursive smoothers later on. To avoid such restrictions, we employ variables $\mathbf{v}_{t}$ and $\mathbf{w}_{g,t}$ to build the more general constraints in this paper.

	For simplicity of notation, we denote
	${\mathbf{w}}_t = \begin{bmatrix} \mathbf{w}_{1,t}^\top, & \ldots, & \mathbf{w}_{N_g,t}^\top  \end{bmatrix}^\top$, ${\mathbf{G}}_t = \begin{bmatrix} \mathbf{G}_{1,t}^\top, & \ldots, & \mathbf{G}_{N_g,t}^\top\end{bmatrix}^\top$,
	and then solve \eqref{eq:whole_optimisation}, using an equivalent constrained optimisation problem
	\begin{equation}\label{eq:linear_optimisation} 
		\begin{split}
			\begin{aligned}
				& \min_{\substack{\mathbf{x}_{1:{T}},\mathbf{w}_{1:{T}},\\ \mathbf{v}_{1:T}}}
				\frac{1}{2} \sum_{t=1}^{T}  \left\| \mathbf{y}_t - \mathbf{h}_t(\mathbf{x}_t) \right\|_{\mathbf{R}_t^{-1}}^2      +  \sum_{t=1}^{T} \sum_{g = 1}^{N_g} \mu \left \| \mathbf{w}_{g,t}\right \|_2   \\
				& \quad + \frac{1}{2} \sum_{t=2}^{T}  \left \| \mathbf{x}_t - \mathbf{a}_t(\mathbf{x}_{t-1}) \right\|_{\mathbf{Q}_t^{-1}}^2  
				+\frac{1}{2} \left\| \mathbf{x}_1  -   \mathbf{m}_1 \right\|_{ \mathbf{P}_1^{-1} }^2 \\
				&\quad {\mathrm{s.t.}}\,  
				\begin{bmatrix}
					\mathbf{x}_t - \mathbf{B}_t \, \mathbf{x}_{t-1} - \mathbf{d}_t \\
					{\mathbf{w}}_t \end{bmatrix}
				= \begin{bmatrix}
					\mathbf{I} \\
					{\mathbf{G}}_t \end{bmatrix} 
				\mathbf{v}_{t}, \quad  t = 1,\ldots,T. 
			\end{aligned}
		\end{split}
	\end{equation}
	The variables $ \mathbf{x}_{1:T}$, $\mathbf{w}_{1:T}$, and $\mathbf{v}_{1:T}$ can be handled by defining the augmented Lagrangian function
	\begin{equation}\label{eq:Lagrangian_general}
		\begin{split}
			\begin{aligned}
				&\mathcal{L}_{\gamma}(\mathbf{x}_{1:T},{\mathbf{w}_{1:T}},\mathbf{v}_{1:T};\bm{\eta}_{1:T}) \triangleq  
				\frac{1}{2} \sum_{t=1}^{T}  \left\| \mathbf{y}_t - \mathbf{h}_t(\mathbf{x}_t) \right\|_{\mathbf{R}_t^{-1}}^2\\
				& + \frac{1}{2} \sum_{t=2}^{T}  \left\| \mathbf{x}_t - \mathbf{a}_t(\mathbf{x}_{t-1}) \right\|_{\mathbf{Q}_t^{-1}}^2  
				+\frac{1}{2} \left\| \mathbf{x}_1  -   \mathbf{m}_1 \right\|_{ \mathbf{P}_1^{-1} }^2 \\
				&\, + 
				\sum_{t = 1}^{T} \sum_{g = 1}^{N_g}\mu \left \| \mathbf{w}_{g,t} \right \|_2 
				+  \sum_{t = 1}^{T} \bm{\eta}_t^\T \left(\begin{bmatrix}
					\mathbf{u}_t\\
					{\mathbf{w}}_{t} \end{bmatrix}
				- \begin{bmatrix}
					\mathbf{I} \\
					{\mathbf{G}}_{t}  \end{bmatrix}
				\mathbf{v}_t \right)  \\
				&\qquad+  \sum_{t = 1}^{T}  \frac{\gamma}{2} \left\| \begin{bmatrix}
					\mathbf{u}_t \\
					{\mathbf{w}}_{t} \end{bmatrix}
				- \begin{bmatrix}
					\mathbf{I} \\
					{\mathbf{G}}_{t}\end{bmatrix}
				\mathbf{v}_t \right\|_2^2,
			\end{aligned}
		\end{split}
	\end{equation}
	where $\mathbf{u}_t = \mathbf{x}_t - \mathbf{B}_t \, \mathbf{x}_{t-1} - \mathbf{d}_t$, $\bm{\eta}_{t} \in \mathbb{R}^{(N_x  + P_g \times N_g ) } $ is a Lagrangian multiplier, and $\gamma > 0 $ is a penalty parameter. 
	
	The multi-block ADMM (mADMM) framework minimises the function $\mathcal{L}_{\gamma}$ by alternating the $\mathbf{x}_{1:T}$-minimisation step, the $\mathbf{w}_{1:T}$-minimisation step, the $\mathbf{v}_{1:T}$-minimisation step, and the dual variable $\bm{\eta}_{1:T}$ update step. Given $(\mathbf{x}_{1:T}^{(k)},\mathbf{w}_{1:T}^{(k)},\mathbf{v}_{1:T}^{(k)},\bm{\eta}_{1:T}^{(k)})$, the iteration of mADMM has the following steps:
	\begin{subequations}\label{eq:subproblem_dynamic} 
		\begin{align}
			\label{eq:x-primal}
			&\mathbf{x}_{1:T}^{(k+1)}
			= \arg\min_{\mathbf{x}_{1:T}}
			\sum_{t=1}^{T} \frac{1}{2} \left \| {\mathbf{y}_t}- \mathbf{h}_t(\mathbf{x}_t)\right \|_{\mathbf{R}_t^{-1}}^2 + \frac{1}{2} \left \|  \mathbf{x}_1 - \mathbf{m}_1   \right \|_{\mathbf{P}_1^{-1}}^2\notag \\ 
			& + \frac{1}{2} \sum_{t=2}^{T} \|\mathbf{x}_{t} - \mathbf{a}_t(\mathbf{x}_{t-1}) \|_{\mathbf{Q}_t^{-1}}^2 
			+ \frac{\gamma}{2}  \sum_{t=1}^T \left\|\mathbf{u}_t -\mathbf{v}_t^{(k)}  +  \frac{\overline{\bm{\eta}}_t^{(k)}}{\gamma} \right \|_2^2, \\
			\label{eq:w-primal}
			& {\mathbf{w}}_{t}^{(k+1)}  = \arg\min_{\mathbf{w}_{t}}
			\sum_{g = 1}^{N_g}\mu  \left \| \mathbf{w}_{g,t} \right \|_2  
			+  \frac{\gamma}{2} \left\|{\mathbf{w}}_{t} -{\mathbf{G}}_{t} \mathbf{v}_t^{(k)} + \frac{\underline{\bm{\eta}}_{t}^{(k)}}{\gamma} \right\|_2^2     \\
			\label{eq:v-primal}
			&\mathbf{v}_{t}^{(k+1)} = \arg\min_{\mathbf{v}_t} \frac{\gamma}{2} \left\| \begin{bmatrix}
				\mathbf{u}_t^{(k+1)} \\
				{\mathbf{w}}_{t}^{(k+1)} \end{bmatrix}
			- \begin{bmatrix}
				\mathbf{I} \\
				{\mathbf{G}}_{t} \end{bmatrix}
			\mathbf{v}_t  + \frac{\bm{\eta}_t^{(k)}}{\gamma} \right\|_2^2,  \\
			\label{eq:eta-primal}
			&\bm{\eta}_t^{(k+1)} = \bm{\eta}_t^{(k)} + \gamma \left(\begin{bmatrix}
				\mathbf{u}_t^{(k+1)}   \\
				{\mathbf{w}}_{t}^{(k+1)}\end{bmatrix}
			- \begin{bmatrix}
				\mathbf{I} \\
				{\mathbf{G}}_{t}\end{bmatrix}
			\mathbf{v}_t^{(k+1)} \right),
		\end{align}
	\end{subequations}
	where ${\bm{\eta}_t} =\operatorname{vec}(\overline{\bm{\eta}}_t,\underline{\bm{\eta}}_{1,t},\ldots,\underline{\bm{\eta}}_{N_g,t})$. We solve the $\mathbf{w}_{t}$, $\mathbf{v}_{t}$, and $\bm{\eta}_{t}$ subproblems for each $t$, respectively. The $\mathbf{w}_{t}$-subproblem and $\mathbf{v}_{t}$-subproblem have the solutions
	\begin{subequations}\label{eq:w-primal-t} 
		\begin{align}
			\label{eq:wt_update}
			&	\mathbf{w}_{t}^{(k+1)}  = \mathcal{S}_{\mu/\gamma} \left( \mathbf{G}_{g,t} \mathbf{v}_t^{(k)} - \underline{\bm{\eta}}_{g,t}^{(k)}/\gamma\right), \\
			\label{eq:vt_update}
			&\mathbf{v}_t^{(k+1)}   =  \frac{1}{\gamma}( \mathbf{I} + {\mathbf{G}}_t^\T {\mathbf{G}}_t)^{-1}  \left(  
			\begin{bmatrix}
				\mathbf{I} \\
				{\mathbf{G}}_t  \end{bmatrix}^\T  
			\left( \gamma  \begin{bmatrix}
				\mathbf{u}_t^{(k+1)}   \\
				{\mathbf{w}}_t^{(k+1)} \end{bmatrix} + \bm{\eta}_t^{(k)}\right)
			\right),
		\end{align}
	\end{subequations}
	where $\mathcal{S}_{\mu/\gamma} (\cdot) $ is the shrinkage operator \cite{Puig2011shrinkage}. 
	
	Given the mADMM framework, the solutions in \eqref{eq:wt_update}, \eqref{eq:vt_update}, and \eqref{eq:eta-primal} are the basic steps of our methods. In a single iteration, the $\mathbf{w}_{t}$-update can be computed in $\mathcal{O}(N_g)$ operations, and each $\mathbf{v}_{t}$-update takes $\mathcal{O}(N_x^3)$. However, when the $\mathbf{x}_{1:T}$-subproblem is solved by the batch estimation methods, it typically takes $\mathcal{O}(N_x^3 T^3)$ operations. Thus the main computational demand is in updating $\mathbf{x}_{1:T}$. Our main goal here is to derive efficient methods for the $\mathbf{x}_{1:T}$-minimisation step. Before that, we first develop batch methods to solve the $\mathbf{x}_{1:T}$-subproblem.

	\subsection{Batch Solution for Affine Systems}
	\label{sec:batch_linear}
	
	The first batch method we explore is for affine Gaussian systems. We first stack all the state variables into single variables, and then rewrite the $\mathbf{x}_{1:T}$-subproblem \eqref{eq:x-primal} in the form
	\begin{equation}\label{eq:x_primal_batch} 
		\begin{split}
			\begin{aligned}
				&\mathbf{x}^\star = \arg\min_{\mathbf{x}}
				\frac{1}{2} \left \| {\mathbf{y}}- \mathbf{H}\,\mathbf{x} - \mathbf{e}\right \|_{\mathbf{R}^{-1}}^2 
				+\frac{1}{2} \left \| {\mathbf{m}}- \mathbf{A} \mathbf{x} -  \mathbf{b}\right \|_{\mathbf{Q}^{-1}}^2\\ 
				&\quad  + \frac{\gamma}{2} \left\| \mathbf{\Phi} \mathbf{x}  - \mathbf{d} -\mathbf{v}^{(k)}  +  {\overline{\bm{\eta}}^{(k)} }/{\gamma} \right\|_2^2,
			\end{aligned}
		\end{split}
	\end{equation}
	where we have set
	\begin{equation}
		\label{eq:linear_vector_sets}
		\begin{aligned}
			\mathbf{x} = \mathbf{x}_{1:T}, \quad
			\mathbf{\Phi} = \begin{pmatrix}
				\mathbf{I}   & \mathbf{0} & & \\
				-\mathbf{B}_2  &     \mathbf{I}   &  \ddots & \\
				& \ddots & \ddots & \mathbf{0} \\
				& &  -\mathbf{B}_T &  \mathbf{I}  \\
			\end{pmatrix}.
		\end{aligned}
	\end{equation}
	The other variables $\mathbf{y}$, $\mathbf{e}$, $\mathbf{m}$, $\mathbf{d}$, $\mathbf{e}$, $\mathbf{v}$, $\overline{\bm{\eta}}$, $\mathbf{H}$, $\mathbf{R}$, $\mathbf{Q}$, $\mathbf{A}$ are defined analogously to Equation (17) in \cite{Gao2019ieks}.  
	By setting the derivative to zero, the solution is
	\begin{equation}
		\label{eq:x_linear_solution}
		\begin{split}
			& \mathbf{x}^{(k+1)}  = \left( \mathbf{H}^\T \mathbf{R}^{-1} \mathbf{H} + \mathbf{A}^\T \mathbf{Q}^{-1} \mathbf{A}
			+ \gamma \mathbf{\Phi}^\T \mathbf{\Phi} \right)^{-1} \\
			&\quad  \times \big( \mathbf{H}^\T \mathbf{R}^{-1} ( \mathbf{y} -  \mathbf{e}) + 
			\mathbf{A}^\T \mathbf{Q}^{-1} ( \mathbf{m} -  \mathbf{b})   \\
			&\qquad +\gamma\mathbf{\Phi}^\T (\mathbf{d} + \mathbf{v}^{(k)}  -  \overline{\bm{\eta}}^{(k)}/\gamma)  \big).
		\end{split}
	\end{equation}
	In other words, computing the $\mathbf{x}$-minimisation amounts to solving a linear system with the coefficient matrix $\mathbf{H}^\T \mathbf{R}^{-1} \mathbf{H} + \mathbf{A}^\T \mathbf{Q}^{-1} \mathbf{A} + \gamma \mathbf{\Phi}^\T \mathbf{\Phi}$. When the matrix inverse exists, the $\mathbf{x}$-subproblem has a unique solution. Additionally, with a sparsity assumption on the states $\mathbf{x}_{t}$, $\mathbf{\Phi}$ is an identity matrix, and $\mathbf{d}$ is a zero vector. When the  noise $\mathbf{q}_{t}$ is sparse, we can set
	\begin{equation} 
		\begin{aligned}
			\mathbf{\Phi} = \mathbf{A}, \quad 
			\mathbf{d} = \mathbf{m} -  \mathbf{b},
		\end{aligned}
	\end{equation}
	which corresponds to the setting of $\mathbf{B}_t$ and $\mathbf{d}_t$ according to Table~\ref{tab:selection_sparsity}.
	
	The disadvantage of the batch solution is that it requires an extensive amount of computations when $T$ is large. For this reason, in Section \ref{sec:linear_proposed}, we propose to use an augmented recursive smoother, which is mathematically equivalent to the batch method, to improve the computational performance. 
	
	\subsection{Gauss--Newton (GN) for Nonlinear Systems}
	\label{sec:batch_nonlinear}
	
	When the system is nonlinear, we use a similar batch notation as in the affine case, and additionally define the nonlinear functions
	\begin{equation}\label{sets}
		\begin{aligned}
			\mathbf{a}(\mathbf{x}) &=  \operatorname{vec}(\mathbf{x}_1, \mathbf{x}_2 - \mathbf{a}_2(\mathbf{x}_1),\dots,\mathbf{x}_{T} -\mathbf{a}_{T}(\mathbf{x}_{T-1})),\\
			\mathbf{h}(\mathbf{x}) &= \operatorname{vec}(\mathbf{h}_1(\mathbf{x}_1),\dots,\mathbf{h}_{T}(\mathbf{x}_{T})).
		\end{aligned}
	\end{equation}
	The primal $\mathbf{x}_{1:T}$-subproblem then has the form
	\begin{equation}
		\label{eq:x_nonlinear_function}
		\begin{split}
			\mathbf{x}^{(k+1)}  = \arg\min_{\mathbf{x}} \theta(\mathbf{x}),
		\end{split}
	\end{equation}
	where
	\begin{equation}
		\begin{aligned} \label{eq:gn_function}
			&\theta(\mathbf{x}) = \frac{1}{2} \left \| {\mathbf{y}}- \mathbf{h}(\mathbf{x})\right \|_{\mathbf{R}^{-1}}^2 
			+\frac{1}{2} \left \|  \mathbf{m} - 	\mathbf{a}(\mathbf{x}) \right \|_{\mathbf{Q}^{-1}}^2 \\
			&\quad + \frac{\gamma}{2} \left\| \mathbf{\Phi} \mathbf{x}  - \mathbf{d} -\mathbf{v}^{(k)}  +  \overline{\bm{\eta}}^{(k)} /\gamma \right\|_2^2.
		\end{aligned}
	\end{equation}
	The function $\theta(\mathbf{x})$ can now be minimised by the Gauss--Newton (GN) method \cite{Wright2006Numerical}. 
	In GN, we first linearise the nonlinear functions $\mathbf{a}(\mathbf{x}) $ and $\mathbf{h}(\mathbf{x})$,
	and then replace them in $\theta(\mathbf{x})$ by the linear (or actually affine) approximations. The GN iteration then becomes
	\begin{equation}
		\label{eq:x_GN_iteration}
		\begin{split}
			& \mathbf{x}^{(k,i+1)} = \left( \mathbf{J}_{\theta}^\T \mathbf{J}_{\theta}(\mathbf{x}^{(k,i)}) \right)^{-1} 
			\bigg[  \mathbf{J}_h^\T (\mathbf{x}^{(k,i)}) \mathbf{R}^{-1}
			\bigg( \mathbf{y}  - \mathbf{h}(\mathbf{x}^{(k,i)}) \\
			&+ \mathbf{J}_{h}(\mathbf{x}^{(k,i)}) \mathbf{x}^{(k,i)} \bigg)  
			+ \mathbf{J}_a^\T(\mathbf{x}^{(k,i)}) \mathbf{Q}^{-1} \bigg( \mathbf{m}  - \mathbf{a}(\mathbf{x}^{(k,i)})  \\  
			&\, + \mathbf{J}_{a}(\mathbf{x}^{(k,i)}) \,\mathbf{x}^{(k,i)} \bigg)  
			+  \gamma \mathbf{\Phi}^\T \left(   \mathbf{d}+ \mathbf{v}^{(k)} - \overline{\bm{\eta}}^{(k)}/\gamma \right)  \bigg],
		\end{split}
	\end{equation}
	where  
	$$\mathbf{J}_{\theta}^\T \mathbf{J}_{\theta}(\mathbf{x}) 
	= \mathbf{J}_{h}^\T(\mathbf{x}) \mathbf{R}^{-1} \mathbf{J}_{h}(\mathbf{x})	
	+ \mathbf{J}_{a}^\T(\mathbf{x})  \mathbf{Q}^{-1} \mathbf{J}_{a}(\mathbf{x}) +  \gamma \mathbf{\Phi}^\T \mathbf{\Phi}.$$
	
	The above computations are carried out iteratively until a maximum number of iterations $I_{\max}$ is reached. We take the solution $\mathbf{x}^{(k,I_{\max})}$ as the next iterate $\mathbf{x}^{(k+1)}$. 
	While GN avoids the trouble of computing the Hessians of the model functions, it has problems when the Jacobians are rank-deficient. The Levenberg--Marquardt method is introduced next to address this problem.

	\subsection{Levenberg--Marquardt (LM) Method}
	\label{sec:nonlinear_LM}
	The Levenberg--Marquardt  (LM) method \cite{LM1978}, also called as the regularised or damped GN method, improves the performance of GN by using an additional regularisation term.
	With damping factors $\lambda^{(i)} >0$ and a sequence of positive definite regularisation matrices $\mathbf{S}^{(i)}$, the function $\theta(\mathbf{x})$ can be approximated by
	\begin{equation}
		\begin{split} \label{eq:function_app_LM}
			&{\theta} (\mathbf{x}) \approx 
			\frac{1}{2} \left \| {\mathbf{y}}- \mathbf{h}(\mathbf{x}^{(i)}) + \mathbf{J}_{h}(\mathbf{x}^{(i)}) (\mathbf{x} - \mathbf{x}^{(i)}))\right \|_{\mathbf{R}^{-1}}^2 \\ 
			& +\frac{1}{2} \left \|  \mathbf{m} - \mathbf{a}(\mathbf{x}^{(i)}) + \mathbf{J}_{a}(\mathbf{x}^{(i)}) (\mathbf{x} - \mathbf{x}^{(i)}) \right \|_{\mathbf{Q}^{-1}}^2   \\
			& + \frac{\gamma}{2} \left\| \mathbf{\Phi} \mathbf{x}  - \mathbf{d} -\mathbf{v}^{(k)}  +  {\overline{\bm{\eta}}^{(k)} }/{\gamma} \right\|_2^2 
			+ \frac{\lambda^{(i)}}{2} \left\|  \mathbf{x} -  \mathbf{x}^{(i)} \right\|_{[{\mathbf{S}^{(i)}]}^{-1}}^2.
		\end{split}
	\end{equation}
	Using the minimum of this approximate cost function at each step $i$ as the next iterate, we get the following iteration:
	\begin{equation}
		\label{eq:x_LM_iteration}
		\begin{split}
			&\mathbf{x}^{(k,i+1)} = \left(\mathbf{J}_{\theta}^\T \mathbf{J}_{\theta}(\mathbf{x}^{(k,i)} )  + \lambda^{(i)} {[{\mathbf{S}^{(i)}]}^{-1}}\right)^{-1}   \\
			& \bigg[  \mathbf{J}_h^\T(\mathbf{x}^{(k,i)}) \mathbf{R}^{-1}
			\bigg( \mathbf{y}  - \mathbf{h}(\mathbf{x}^{(k,i)}) 
			+ \mathbf{J}_{h}(\mathbf{x}^{(k,i)}) \mathbf{x}^{(k,i)} \bigg)   \\
			&+ \mathbf{J}_a^\T(\mathbf{x}^{(k,i)}) \mathbf{Q}^{-1} \bigg( \mathbf{m}  - \mathbf{a}(\mathbf{x}^{(k,i)})   
			+ \mathbf{J}_{a}(\mathbf{x}^{(k,i)}) \,\mathbf{x}^{(k,i)} \bigg)  \\
			&\quad +  \gamma \mathbf{\Phi}^\T \left(   \mathbf{d}+ \mathbf{v}^{(k)} - \overline{\bm{\eta}}^{(k)}/\gamma \right)  \bigg],
		\end{split}
	\end{equation}
	which is the LM method, when augmented with an adaptation scheme for the regularisation parameters $\lambda^{(i)} > 0$. The regularisation parameter here helps to overcome some problematic cases, for example, the case when $\mathbf{J}_{\theta}^\T \mathbf{J}_{\theta}(\mathbf{x}) $ is rank-deficient, by ensuring the existence of the unique minimum of the approximate cost function. 
	
	At each mADMM iteration, the computation in the $\mathbf{x}_{1:T}$-subproblem such as \eqref{eq:x_linear_solution}, \eqref{eq:x_GN_iteration}, and \eqref{eq:x_LM_iteration}, has a high cost when $T$ is large (e.g., $T = 10^8$). As discussed above, when the main computational demand is indeed in the update of $\mathbf{x}_{1:T}$. Therefore, we utilise the equivalence between batch solutions and recursive smoothers, and then develop efficient augmented recursive smoother methods for solving the $\mathbf{x}_{1:T}$-subproblem.

	\section{Augmented Recursive Smoothers}
	\label{sec:proposed}
	In the section, we will present the augmented KS, GN-IEKS, and LM-IEKS methods for solving the $\mathbf{x}_{1:T}$-subproblem.
	
	\subsection{Augmented Kalman Smoother (KS) for Affine Systems}
	\label{sec:linear_proposed}
	Solving the $\mathbf{x}_{1:T}$-subproblem involves minimisation of a quadratic optimisation problem, which can be efficiently solved by Kalman smoother (KS), see \cite{Gao2020Thesis} for details. We rewrite the batch minimisation problem \eqref{eq:x_primal_batch} as
	\begin{equation}\label{eq:MAP_primal_state}
		\begin{split}
			\begin{aligned}
				&\mathbf{x}_{1:T}^\star
				= \arg\min_{\mathbf{x}_{1:T}} \frac{1}{2} \sum_{t=1}^T\left \| {\mathbf{y}}_t- \mathbf{H}_t \mathbf{x}_t - \mathbf{e}_t \right \|_{\mathbf{R}_t^{-1}}^2 \\
				&  +\frac{1}{2} \sum_{t=2}^T \left \|  \mathbf{x}_t - 	\mathbf{A}_t \mathbf{x}_{t-1} - \mathbf{b}_t   \right \|_{\mathbf{Q}_t^{-1}}^2 
				+\frac{1}{2} \left \|\mathbf{x}_1 - \mathbf{m}_1  \right \|^2_{\mathbf{P}_1^{-1}}\\
				&\quad + \frac{\gamma}{2}  \sum_{t=2}^T \left\|  \mathbf{x}_t - 	\mathbf{B}_t \mathbf{x}_{t-1}   - \mathbf{d}_t -\mathbf{v}_t  +  \frac{\overline{\bm{\eta}}_t}{\gamma} \right\|_2^2  \\
				&\qquad + \frac{\gamma}{2} \left\| \mathbf{x}_1  - \mathbf{m}_1 -\mathbf{v}_1  +  \frac{\overline{\bm{\eta}}_1}{\gamma}  \right \|_2^2.
			\end{aligned}
		\end{split}
	\end{equation}
	It is worth noting that when $\mathbf{B}_t = \mathbf{0}$ and $\mathbf{d}_t =  \mathbf{0}$, the cost function corresponds to the function minimised by KS, which leads to a similar method as was presented in \cite{Gao2019ieks}. For notational convenience, we leave out the iteration number $k$ of mADMM in the following.
	
	Here we consider the general case where $\mathbf{B}_t $ and $\mathbf{d}_t$ are non-zero. Such case is more complicated as we cannot have two dynamic models in a state-space model. For building a dynamic state-space model, we need to fuse the terms in the pairs $ \frac{1}{2}\|  \mathbf{x}_t -	\mathbf{A}_t \mathbf{x}_{t-1} - \mathbf{b}_t    \|_{\mathbf{Q}_t^{-1}}^2$ and $\frac{1}{2} \|  \mathbf{x}_t - 	\mathbf{B}_t \mathbf{x}_{t-1}   - \mathbf{d}_t -\mathbf{v}_t  +  \overline{\bm{\eta}}_t /\gamma \|_2^2 $, along with $\frac{1}{2}\|\mathbf{x}_1 - \mathbf{m}_1  \|^2_{\mathbf{P}_1^{-1}}$ and $\frac{1}{2} \| \mathbf{x}_1  - \mathbf{m}_1 -\mathbf{v}_1  +  \overline{\bm{\eta}}_1/\gamma  \|_2^2$ into single terms. We combine matrices $\mathbf{A}_t$ and $\mathbf{B}_t$ to an artificial transition matrix $\tilde{\mathbf{A}}_t$, fuse $\mathbf{b}_t$ and $(\mathbf{d}_t + \mathbf{v}_t - \overline{\bm{\eta}}_t /\gamma )$ to an artificial bias $\tilde{\mathbf{b}}_t$, and introduce an artificial covariance $\tilde{\mathbf{Q}}_t$, which yields
	\begin{equation}\label{eq:new_system_setting}
		\begin{aligned}
			\tilde{\mathbf{A}}_t  &= (\mathbf{Q}_t^{-1} + \gamma \mathbf{I})^{-1}
			( \mathbf{Q}_t^{-1} \mathbf{A}_t + \gamma  \mathbf{B}_t), \\
			\tilde{\mathbf{b}}_t &=  (\mathbf{Q}_t^{-1} + \gamma \mathbf{I})^{-1}
			(\mathbf{Q}_t^{-1}\mathbf{b}_t + \gamma \mathbf{d}_t  +   \gamma \mathbf{v}_t -   \overline{\bm{\eta}}_t), \\
			\tilde{\mathbf{Q}}_t^{-1} &= \mathbf{Q}_t^{-1} + \gamma \mathbf{I}.
		\end{aligned}
	\end{equation}

	Now, the new artificial dynamic model \eqref{eq:new_system_setting} allows us to use KS to solve the minimisation problem. The problem \eqref{eq:MAP_primal_state} becomes 
	\begin{equation}\label{eq:madmm_xt_double_re_fun} 
		\begin{split}
			\begin{aligned}
				&\mathbf{x}_{1:T}^\star
				= \arg\min_{\mathbf{x}_{1:T}} \frac{1}{2} \sum_{t=1}^T\left \| {\mathbf{y}}_t- \mathbf{H}_t \mathbf{x}_t - \mathbf{e}_t \right \|_{\mathbf{R}_t^{-1}}^2 \\
				&\,+ \frac{1}{2}  \| \mathbf{x}_t -	\tilde{\mathbf{A}}_t \mathbf{x}_{t-1} - \tilde{\mathbf{b}}_t   \|_{\tilde{\mathbf{Q}}_t^{-1}}^2  
				+ \frac{1}{2} \| \mathbf{x}_1  - \tilde{\mathbf{m}}_1 \|^2_{\tilde{\mathbf{P}}_1^{-1}},  \\
			\end{aligned}
		\end{split}
	\end{equation}
	which corresponds to a state-space model, where additionally the initial state has mean $\tilde{\mathbf{m}}_1 =  (\mathbf{P}_1^{-1} + \gamma \mathbf{I})^{-1}$
	$(\mathbf{P}_1^{-1}\mathbf{m}_1 +   \gamma \mathbf{m}_1  +   \gamma \mathbf{v}_t -   \overline{\bm{\eta}}_t)$ and covariance 
	$\tilde{\mathbf{P}}_1^{-1}  =\mathbf{P}_1^{-1} + \gamma \mathbf{I}$. The solution in \eqref{eq:madmm_xt_double_re_fun} can be then computed by running KS on the augmented state-space model
	\begin{subequations}\label{eq:kf_pdf_gauss}
		\begin{align}
			p(\mathbf{x}_t \mid  \mathbf{x}_{t-1}) &=  \mathcal{N} (\mathbf{x}_t  \mid \tilde{\mathbf{A}}_t \mathbf{x}_{t-1} + \tilde{\mathbf{b}}_t  ,\tilde{\mathbf{Q}}_t), \\
			p(\mathbf{y}_t \mid  \mathbf{x}_t) &= \mathcal{N} (\mathbf{y}_t  \mid  \mathbf{H}_t \mathbf{x}_{t} + \mathbf{e}_{t}, \mathbf{R}_t).
		\end{align}
	\end{subequations}
	The augmented KS requires only $\mathcal{O}(N_x^3 T)$ operations which is much less than the corresponding batch solution in \eqref{eq:x_linear_solution}. The augmented KS method is summarised in Algorithm~\ref{alg:augmented_KS}.

	\subsection{Gauss--Newton IEKS (GN-IEKS) for Nonlinear Systems}
	\label{sec:nonlinear_ieks}
	
	The solution of \eqref{eq:gn_function} has similar computational scaling challenges as the affine case discussed in previous section. However, we can use the equivalence of IEKS and GN \cite{Bell1994smoother} to construct an efficient iterative solution for the optimisation problem in the primal space. In the GN-IEKS method, we first approximate the nonlinear model by linearisation, and then use KS on the linearised model. The $\mathbf{x}_{1:T}$-subproblem now takes the form of \eqref{eq:x-primal}. 
	In IEKS, at $i$:th iteration, we form affine approximations of $\mathbf{a}_t(\mathbf{x}_{t-1})$ and $\mathbf{h}_t(\mathbf{x}_t)$ as follows:
	\begin{equation}
		\label{eq:ieks_nonliear_appro}
		\begin{aligned}
			\mathbf{a}_t(\mathbf{x}_{t-1}) &\approx \mathbf{a}_t(\mathbf{x}_{t-1}^{(i)}) + \mathbf{J}_{a_t}(\mathbf{x}_{t-1}^{(i)}) (\mathbf{x}_{t-1} - \mathbf{x}_{t-1}^{(i)}), \\
			\mathbf{h}_t(\mathbf{x}_t) &\approx \mathbf{h}_t(\mathbf{x}_t^{(i)}) + \mathbf{J}_{h_t}(\mathbf{x}_t^{(i)}) (\mathbf{x}_t - \mathbf{x}_t^{(i)}).
		\end{aligned}
	\end{equation}
	We replace the nonlinear functions in the cost function with the above approximations, and compute the next iterate as the solution to the minimisation problem
	\begin{equation}\label{eq:ieks_linearisation}
		\begin{split}
			\begin{aligned}
				& \mathbf{x}_{1:T}^{(i+1)}
				= \arg\min_{\mathbf{x}_{1:T}}
				\frac{1}{2} \left \| {\mathbf{y}_t} -\mathbf{h}_t(\mathbf{x}_t^{(i)}) + \mathbf{J}_{h_t}(\mathbf{x}_t^{(i)}) (\mathbf{x}_t- \mathbf{x}_t^{(i)})\right \|_{\mathbf{R}_t^{-1}}^2 \\ 
				&\, + \frac{1}{2} \sum_{t=2}^{T} \left\|\mathbf{x}_{t} - \mathbf{a}_t(\mathbf{x}_{t-1}^{(i)}) + \mathbf{J}_{a_t}(\mathbf{x}_{t-1}^{(i)}) (\mathbf{x}_{t-1} - \mathbf{x}_{t-1}^{(i)}) \right\|_{\mathbf{Q}_t^{-1}}^2 
				\\
				&\quad + \frac{\gamma}{2}  \sum_{t=2}^T \left\|  \mathbf{x}_t - 	\mathbf{B}_t \,\mathbf{x}_{t-1}   - \mathbf{d}_t -\mathbf{v}_t  +  \frac{\overline{\bm{\eta}}_t}{\gamma}  \right\|_2^2  \\
				&\quad \quad  
				+ \frac{\gamma}{2} \left\| \mathbf{x}_1  - \mathbf{m}_1 -\mathbf{v}_1  +  \frac{\overline{\bm{\eta}}_1}{\gamma}\right\|_2^2 + \frac{1}{2} \left \|  \mathbf{x}_1 - 	\mathbf{m}_1   \right \|_{\mathbf{P}_1^{-1}}^2, 
			\end{aligned}
		\end{split}
	\end{equation}
	which is equivalent to \eqref{eq:MAP_primal_state} with
	\begin{equation} \label{eq:ieks_setting}
		\begin{aligned}
			\mathbf{A}_{t} &=  \mathbf{J}_{a_t}(\mathbf{x}_{t-1}^{(i)}), \quad 
			&\mathbf{b}_t& =  \mathbf{a}_t(\mathbf{x}_{t-1}^{(i)}) - \mathbf{J}_{a_t}(\mathbf{x}_{t-1}^{(i)}) \,\mathbf{x}_{t-1}^{(i)},  \\
			\mathbf{H}_{t} &=\mathbf{J}_{h_t}(\mathbf{x}_t^{(i)}), \quad
			&\mathbf{e}_t &=  \mathbf{h}_t(\mathbf{x}_t^{(i)}) - \mathbf{J}_{h_t}(\mathbf{x}_t^{(i)})\, \mathbf{x}_t^{(i)}.
		\end{aligned}
	\end{equation}
	The precise expressions of $\mathbf{B}_t$ and $\mathbf{d}_t$ depend on our choice of sparsity. When $\mathbf{q}_{t}$ is sparse, the expressions are given by 
	\begin{equation} \label{eq:Bt_setting}
		\begin{aligned}
			\mathbf{B}_{t} =  \mathbf{J}_{a_t}(\mathbf{x}_{t-1}^{(i)}), \quad
			\mathbf{d}_t  =  \mathbf{a}_t(\mathbf{x}_{t-1}^{(i)}) - \mathbf{J}_{a_t}(\mathbf{x}_{t-1}^{(i)}) \,\mathbf{x}_{t-1}^{(i)},
		\end{aligned}
	\end{equation}
	which needs the same computations as in \eqref{eq:new_system_setting}. Thus we can solve the minimisation problem in \eqref{eq:x-primal} by iteratively linearising the nonlinearities and then by applying KS. This turns out to be mathematically equivalent to applying GN to the batch problem as we did in Section~\ref{sec:batch_nonlinear}. The steps of the GN-IEKS method are summarised in Algorithm~\ref{alg:GN-IEKS}.
	
	\begin{algorithm}[htb] \label{alg:augmented_KS}
		\caption{Augmented KS} 
		\KwIn{$\mathbf{y}_t$, $\mathbf{B}_t$, $\mathbf{d}_t$, $\mathbf{A}_t$, $\mathbf{H}_t$, $\mathbf{R}_t$, $\mathbf{Q}_t$, $\mathbf{v}^{(k)}$, $\overline{\bm{\eta}}^{(k)}$, $\mathbf{m}_1$, $\mathbf{P}_1$, and $\gamma$.} 
		\KwOut{$\mathbf{x}_{1:T}^{*}$.}     
		compute $\tilde{\mathbf{A}}_t$, $\tilde{\mathbf{Q}}_t$, and $\tilde{\mathbf{b}}_t$ by \eqref{eq:new_system_setting}\;    
		\For{$t = 1,\ldots,T$}
		{	$\mathbf{m}_t^- =  \tilde{\mathbf{A}}_t  \mathbf{m}_{t-1} + \tilde{\mathbf{b}}_t $\;    
			$\mathbf{P}_t^- = \tilde{\mathbf{A}}_t \,\mathbf{P}_{t-1}\,\tilde{\mathbf{A}}_t^\T + \tilde{\mathbf{Q}}_t$\;  
			$\mathbf{S}_t =  \mathbf{H}_t \, \mathbf{P}_t^-  \, \mathbf{H}_t^\T  + \mathbf{R}_t$\;  
			$\mathbf{K}_t = \mathbf{P}_t^- \,  \mathbf{H}_t^\T  \, [\mathbf{S}_t]^{-1}$\;  
			$\mathbf{m}_t = \mathbf{m}_t^-  + \mathbf{K}_t \big(\mathbf{y}_t - ( \mathbf{H}_t \, \mathbf{m}_t^- + \mathbf{e}_t)\big)$\;  
			$	\mathbf{P}_t = \mathbf{P}_t^- - \mathbf{K}_t \, \mathbf{S}_t \,[\mathbf{K}_t]^\T$\;  
		}
		$\mathbf{m}_T^s = \mathbf{m}_T$ \text{and} $\mathbf{P}_T^s = \mathbf{P}_T$\;
		\For{$t = T-1,\ldots,1$}
		{$\mathbf{G}_t   =   \mathbf{P}_t \, \tilde{\mathbf{A}}_{t+1}^\T  \, [\mathbf{P}^-_{t+1}]^{-1}$\;
			$\mathbf{m}_t^s =   \mathbf{m}_t + \mathbf{G}_t  \, \big(\mathbf{m}^s_{t+1} - \mathbf{m}^-_{t+1}\big)$\;
			$\mathbf{P}_t^s =   \mathbf{P}_t + \mathbf{G}_t \, \big(\mathbf{P}^s_{t+1} - \mathbf{P}^-_{t+1}\big) \, \mathbf{G}_t^\T$\;
		}
		return $\mathbf{x}_{1:T}^{*} = \mathbf{m}_{1:T}^s$\; 
	\end{algorithm}
	
	\begin{algorithm}[htb] \label{alg:GN-IEKS}
		\caption{GN-IEKS}  
		\KwIn{$\mathbf{y}_t$, $\mathbf{B}_t$, $\mathbf{d}_t$, $\mathbf{a}_t$, $\mathbf{h}_t$, $\mathbf{R}_t$, $\mathbf{Q}_t$, $\mathbf{v}^{(k)}$, $\overline{\bm{\eta}}^{(k)}$, $\mathbf{m}_1$, $\mathbf{P}_1$, and $\gamma$.} 
		\KwOut{$\mathbf{x}_{1:T}^{*}$.}
		{set $i \leftarrow 0$ and start from a suitable initial guess $\mathbf{x}_{1:T}^{(0)}$}\; 
		\While{not converged or $i < I_{\max}$}
		{linearise $\mathbf{a}_t$ and $\mathbf{h}_t$ according to \eqref{eq:ieks_nonliear_appro}\; 
			compute $\tilde{\mathbf{A}}_t$, $\tilde{\mathbf{Q}}_t$, $\tilde{\mathbf{b}}_t$ by \eqref{eq:new_system_setting}\;
			compute $\mathbf{x}_{1:T}^{(i+1)}$ by \eqref{eq:ieks_linearisation} using the augmented KS\;
			{$i \leftarrow i+1$}\;
		}
		return $\mathbf{x}_{1:T}^{*} = \mathbf{x}_{1:T}^{(i)}$\; 
	\end{algorithm}

	\subsection{Levenberg--Marquardt IEKS (LM-IEKS)}
	\label{sec:nonlinear_LM_IEKS}
	There also exists a connection between the Levenberg--Marquardt (LM) and a modified version of IEKS. The LM-IEKS method \cite{simo2020LMIEKS} is based on replacing the minimisation of the approximate cost function in \eqref{eq:ieks_linearisation} by a regularised minimisation of the form	
	\begin{equation}\label{eq:lm_ieks_linearisation}
		\begin{split}
			\begin{aligned}
				& \mathbf{x}_{1:T}^{\star}
				= \arg\min_{\mathbf{x}_{1:T}}
				\frac{1}{2} \left \| {\mathbf{y}_t}-\mathbf{h}_t(\mathbf{x}_t^{(i)}) + \mathbf{J}_{h_t}(\mathbf{x}_t^{(i)}) (\mathbf{x}_t - \mathbf{x}_t^{(i)})\right \|_{\mathbf{R}_t^{-1}}^2 \\ 
				& + \frac{1}{2} \sum_{t=2}^{T} \left\|\mathbf{x}_{t} - \mathbf{a}_t(\mathbf{x}_{t-1}^{(i)}) + \mathbf{J}_{a_t}(\mathbf{x}_{t-1}^{(i)}) (\mathbf{x}_{t-1} - \mathbf{x}_{t-1}^{(i)}) \right\|_{\mathbf{Q}_t^{-1}}^2 		
				\\
				& + \frac{\gamma}{2}  \sum_{t=2}^T \left\|  \mathbf{x}_t -	\mathbf{B}_t \,\mathbf{x}_{t-1}   - \mathbf{d}_t - \mathbf{v}_t  +  \frac{\overline{\bm{\eta}}_t}{\gamma}  \right\|_2^2 
				+\frac{1}{2} \left \|  \mathbf{x}_1  -	\mathbf{m}_1   \right \|_{\mathbf{P}_1^{-1}}^2  \\
				&\,   
				+ \frac{\lambda^{(i)}}{2} \sum_{t=1}^T \left\|  \mathbf{x}_t -\mathbf{x}_t^{(i)} \right\|_{[\mathbf{S}_t^{(i)}]^{-1}}^2
				+ \frac{\gamma}{2} \left\| \mathbf{x}_1  - \mathbf{m}_1 -\mathbf{v}_1  +  \frac{\overline{\bm{\eta}}_1 }\gamma \right \|_2^2,
			\end{aligned}
		\end{split}
	\end{equation}
	where we have assume that $\mathbf{S}^{(i)} = \operatorname{diag}(\mathbf{S}_1^{(i)}, \ldots, \mathbf{S}_T^{(i)})$. Similarly to GN-IEKS, when $\mathbf{B}_{t}$ and $\mathbf{d}_{t}$ are non-zero, we need to build a new state-space model in order to have only one dynamic model. Following \cite{simo2020LMIEKS}, the regularisation can be implemented by defining an additional pseudo-measurement $\mathbf{z}_t =  \mathbf{x}_t^{(i)}$ with a noise covariance $\mathbf{\Sigma}_t^{(i)} = \mathbf{S}_t^{(i)}/\lambda^{(i)}$. Using \eqref{eq:new_system_setting} and \eqref{eq:ieks_setting} , we have the augmented state-space model 
	\begin{equation}\label{eq:lmeks_pdf_gauss}
		\begin{aligned}
			p(\mathbf{x}_t \mid  \mathbf{x}_{t-1}) &=  \mathcal{N} (\mathbf{x}_t  \mid \tilde{\mathbf{A}}_t \mathbf{x}_{t-1} + \tilde{\mathbf{b}}_t  ,\tilde{\mathbf{Q}}_t), \\
			p(\mathbf{y}_t \mid  \mathbf{x}_t) &= \mathcal{N} (\mathbf{y}_t  \mid  \mathbf{H}_t \mathbf{x}_{t} + \mathbf{e}_{t}, \mathbf{R}_t),\\
			p(\mathbf{z}_t \mid  \mathbf{x}_t) &= \mathcal{N} (\mathbf{z}_t  \mid  \mathbf{x}_{t}, \mathbf{\Sigma}_t^{(i)}),
		\end{aligned}
	\end{equation}
	which provides the minimum of the cost function as the KS solution. By combining this with $\lambda^{(i)}$ adaptation and iterating, we can implement the LM algorithm for the $\mathbf{x}_{1:T}$-subproblem using the recursive smoother (cf.\ \cite{simo2020LMIEKS}).
	See Algorithm~\ref{alg:LM_EKS} for more details.
	
	\begin{algorithm}[htb] 
		\caption{LM-IEKS}   \label{alg:LM_EKS}
		\KwIn{$\mathbf{y}_t$, $\mathbf{B}_t$, $\mathbf{d}_t$, $\mathbf{a}_t$, $\mathbf{h}_t$, $\mathbf{R}_t$, $\mathbf{Q}_t$, $\mathbf{v}^{(k)}$, $\overline{\bm{\eta}}^{(k)}$, $\mathbf{m}_1$ and $\mathbf{P}_1$; $\mathbf{S}_t$, $\gamma$, $\lambda$, and $\alpha$.} 
		\KwOut{$\mathbf{x}_{1:T}^{*}$.}         
		{set $i \leftarrow 0$ and start from a suitable initial guess $\mathbf{x}_{1:T}^{(0)}$}\; 
		\While{not converged or $i < I_{\max}$}
		{linearise $\mathbf{a}_t$ and $\mathbf{h}_t$ according to \eqref{eq:ieks_nonliear_appro} \; 
			compute $\tilde{\mathbf{A}}_t$, $\tilde{\mathbf{Q}}_t$, $\tilde{\mathbf{b}}_t$ by \eqref{eq:new_system_setting}\;
			update $\mathbf{x}_{1:T}^{(i+1)}$ by \eqref{eq:lmeks_pdf_gauss} based on the augmented KS\;
			\eIf{${\theta}(\mathbf{x}_{1:T}^{(i+1)}) < {\theta}(\mathbf{x}_{1:T}^{(i)}) $}
			{$\lambda^{(i)} \leftarrow \lambda^{(i)}/\alpha$; $i \leftarrow i+1$\;}
			{$\lambda^{(i)} \leftarrow \lambda^{(i)} \, \alpha$\;}
		}	
		return $\mathbf{x}_{1:T}^{*} = \mathbf{x}_{1:T}^{(i)}$\;
	\end{algorithm}

	\subsection{Discussion}
	\label{sec:summary}
	
	All the methods discussed above, namely augmented KS, GN-IEKS, and LM-IEKS, provide efficient ways to solve the $\mathbf{x}_{1:T}$-subproblem. When we leverage the Markov structure of the $\mathbf{x}_{1:T}$-subproblem arising in mADMM iteration, we can significantly reduce the computation burden. In particular, when the function $\mathbf{a}_{t}(\mathbf{x}_{t-1})$ and $\mathbf{h}_{t}(\mathbf{x}_{t})$ are affine, the augmented KS method can be used in the $\mathbf{x}_{1:T}$-subproblem (see \eqref{eq:MAP_primal_state}). Both GN-IEKS and LM-IEKS are based on the use of linearisation of the functions $\mathbf{a}_{t}(\mathbf{x}_{t-1})$ and $\mathbf{h}_{t}(\mathbf{x}_{t})$, and they work well for most nonlinear minimisation problems. However, when the Jacobians (e.g., $\mathbf{J}_{a_t}(\mathbf{x}_{t-1}^{(i)})$ or $\mathbf{J}_{h_t}(\mathbf{x}_t^{(i)})$ in \eqref{eq:ieks_nonliear_appro}) are rank-deficient, the GN-IEKS method cannot be used. As a robust extension of GN-IEKS, LM-IEKS significantly improves the performance of GN-IEKS. It should be noted that when the regularisation term is not used in LM-IEKS (when $\lambda^{(i)} = 0$), then LM-IEKS reduces to GN-IEKS \cite{simo2020LMIEKS}.
	
	\section{Convergence Analysis}
	\label{sec:convergence}
	
	In this section, we prove that under mild assumptions and a proper choice of the penalty parameter, our KS-mADMM, GN-IEKS-mADMM, and LM-IEKS-mADMM methods converge to a stationary point of the original problem.
	Although convergence of the multi-block ADMM has already been proven, the existing results strongly depend on convexity assumptions or Lipschitz continuity conditions (see, e.g., \cite{He2012convergence,admm2019convergence,Hong2016Convergence}). In the analysis, we require neither the convexity of the objective function nor Lipschitz continuity conditions. Instead, we use a milder condition on the amenability. This allows us to establish the convergence of the three methods.
	
	For the case when the functions $\mathbf{\mathbf{a}}_t(\mathbf{x}_{t-1})$ and $\mathbf{h}_t(\mathbf{x}_{t})$ are affine (see \eqref{eq:linear_model}), we have the following lemma.
	\begin{lemma}
		\label{lemma:s_sequence}
		Let $ \left\{\mathbf{x}_{1:T}^{(k)},\mathbf{w}_{1:T}^{(k)},\mathbf{v}_{1:T}^{(k)},\bm{\eta}_{1:T}^{(k)}\right\} $ be the iterates generated by \eqref{eq:subproblem_dynamic}. Then we have 
		\begin{equation}\label{eq:convex_H_leq}
			\begin{split}
				\begin{aligned}
					&\left\|\begin{bmatrix}
						\mathbf{v}^{(k+1)}\\
						\bm{\eta}^{(k+1)}\\
					\end{bmatrix}  - \begin{bmatrix}
						\mathbf{v}^\star\\
						\bm{\eta}^\star\\
					\end{bmatrix} \right\|_{\mathbf{\Omega}}^2 \\ 
					&\leq 
					\left\|\begin{bmatrix}
						\mathbf{v}^{(k)}\\
						\bm{\eta}^{(k)}\\
					\end{bmatrix}  - \begin{bmatrix}
						\mathbf{v}^\star\\
						\bm{\eta}^\star\\
					\end{bmatrix} \right\|_{\mathbf{\Omega}}^2     -
					\left\|\begin{bmatrix}
						\mathbf{v}^{(k)}\\
						\bm{\eta}^{(k)}\\
					\end{bmatrix}  - 
					\begin{bmatrix}
						\mathbf{v}^{(k+1)}\\
						\bm{\eta}^{(k+1)} \\
					\end{bmatrix} \right\|_{\mathbf{\Omega}}^2,
				\end{aligned}
			\end{split}
		\end{equation}
		where 
		$\mathbf{\Omega} = \begin{bmatrix}
			\gamma \mathbf{I} + {\mathbf{G}}^\T {\mathbf{G}}  &  \mathbf{0}\\
			\mathbf{0}  &\mathbf{I}/\gamma \\
		\end{bmatrix}$ and 
		${\mathbf{G}} = \begin{bmatrix} \mathbf{G}_{1} \\ \vdots \\ \mathbf{G}_{T} \end{bmatrix}$.
	\end{lemma}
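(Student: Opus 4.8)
The plan is to recognize \eqref{eq:convex_H_leq} as the standard non-expansiveness / Fej\'er-monotonicity inequality for ADMM and to derive it by specializing to the affine case where \eqref{eq:whole_optimisation} is a convex problem. First I would observe that when $\mathbf{a}_t$ and $\mathbf{h}_t$ are affine, the objective in \eqref{eq:linear_optimisation} is convex (indeed, a sum of convex quadratics and the convex norms $\mu\|\mathbf{w}_{g,t}\|_2$), and the constraints are linear. Hence a KKT point $(\mathbf{x}^\star_{1:T},\mathbf{w}^\star_{1:T},\mathbf{v}^\star_{1:T},\bm{\eta}^\star_{1:T})$ exists, and I would write out the optimality conditions: stationarity in $\mathbf{x}$, stationarity in $\mathbf{w}$ (a subgradient inclusion for $\mu\|\cdot\|_2$), stationarity in $\mathbf{v}$, and primal feasibility $\mathbf{u}^\star_t = \mathbf{v}^\star_t$, $\mathbf{w}^\star_t = \mathbf{G}_t\mathbf{v}^\star_t$.

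Next I would write the variational (subgradient) characterizations of the three minimization steps \eqref{eq:x-primal}, \eqref{eq:w-primal}, \eqref{eq:v-primal} at iteration $k$, and the dual update \eqref{eq:eta-primal}. The key algebraic manoeuvre is to eliminate the primal variables $\mathbf{x}$ and $\mathbf{w}$ in favour of $\mathbf{v}$ and $\bm{\eta}$: since the constraint couples $\mathbf{v}_t$ to both $\mathbf{u}_t$ (an affine function of $\mathbf{x}$) and $\mathbf{w}_t$, and since the $\mathbf{v}$-update \eqref{eq:vt_update} is explicit, the pair $(\mathbf{v},\bm{\eta})$ evolves under a single operator. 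I would then pair the optimality conditions at the iterate with those at the fixed point, using convexity/monotonicity of the subdifferentials to get the cross-terms to have a favourable sign, and regroup the resulting inner products. The weight matrix $\mathbf{\Omega} = \operatorname{diag}(\gamma\mathbf{I}+\mathbf{G}^\T\mathbf{G},\ \mathbf{I}/\gamma)$ is exactly the metric in which the ADMM operator on $(\mathbf{v},\bm{\eta})$ is firmly nonexpansive: the $\gamma\mathbf{I}+\mathbf{G}^\T\mathbf{G}$ block comes from the quadratic penalty attached to $\mathbf{v}_t$ through the stacked constraint matrix $[\mathbf{I};\ \mathbf{G}_t]$ (whose Gram matrix is $\mathbf{I}+\mathbf{G}_t^\T\mathbf{G}_t$, scaled by $\gamma$), and the $\mathbf{I}/\gamma$ block is the usual dual metric. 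Completing the square then yields \eqref{eq:convex_H_leq} with the ``$-\|(\mathbf{v}^{(k)},\bm{\eta}^{(k)}) - (\mathbf{v}^{(k+1)},\bm{\eta}^{(k+1)})\|_\mathbf{\Omega}^2$'' slack term on the right.

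The step I expect to be the main obstacle is bookkeeping the $\mathbf{x}$- and $\mathbf{w}$-elimination cleanly: the $\mathbf{x}$-subproblem is solved over the whole trajectory (via the augmented KS) so $\mathbf{u}^{(k+1)}_t$ is not a simple shift of $\mathbf{v}^{(k)}_t$, and one must invoke the first-order optimality of \eqref{eq:x-primal} and \eqref{eq:w-primal} together with the monotonicity of the respective subdifferentials to cancel the unwanted cross terms rather than computing them explicitly. A secondary technical point is handling the nonsmooth term $\mu\|\mathbf{w}_{g,t}\|_2$ purely through its subgradient inequality, so that no smoothness of the objective is needed — this is where the ``amenability''-type assumption, rather than Lipschitz continuity, enters. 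Once these cancellations are in place, the remaining manipulation is the routine completion of the square that produces the three $\mathbf{\Omega}$-weighted terms, and the lemma follows.
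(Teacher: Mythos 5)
Your plan is essentially the paper's own argument: the paper likewise writes the variational-inequality (subgradient) optimality conditions of the $\mathbf{x}$-, $\mathbf{w}$-, $\mathbf{v}$-, and $\bm{\eta}$-steps, groups $(\mathbf{x},\mathbf{w})$ into a single convex function $\Xi$, exploits monotonicity of the resulting affine skew-symmetric map (the He--Yuan formulation) to cancel the cross terms, and completes the square in the $\mathbf{\Omega}$-metric to obtain \eqref{eq:convex_H_leq}. Your ``firmly nonexpansive operator on $(\mathbf{v},\bm{\eta})$'' framing is the operator-theoretic restatement of the same computation, so the proposal is correct and follows the same route.
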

	
	\begin{proof}
		See Appendix \ref{pf:s_sequence}. 
	\end{proof}
	
	We will then establish the convergence rate of the proposed method in terms of the iteration number.
	
	\begin{theorem}[Convergence of KS-mADMM]
		\label{theorem_admm}
		Let $\mathbf{Q}_t$ and $\mathbf{P}_1$ be positive semi-definite matrices. Then the sequence $\{\mathbf{x}_{1:T}^{(k)}, \mathbf{w}_{1:T}^{(k)}, \mathbf{v}_{1:T}^{(k)}, \bm{\eta}_{1:T}^{(k)}\}$ generated by KS-mADMM converges to a stationary point $(\mathbf{x}_{1:T}^\star, \mathbf{w}_{1:T}^\star, \mathbf{v}_{1:T}^\star,\bm{\eta}_{1:T}^\star)$ with the rate ${o}(\frac{1}{k})$. 	
	\end{theorem}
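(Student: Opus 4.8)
The plan is to treat \eqref{eq:linear_optimisation} in the affine case as a convex problem (the fidelity terms are convex quadratics and each $\mu\|\mathbf{w}_{g,t}\|_2$ is convex), so that a stationary point is exactly a KKT point, and then to combine the Fej\'er-type inequality of Lemma~\ref{lemma:s_sequence} with a standard summable--monotone argument. First, since $\gamma>0$ the matrix $\mathbf{\Omega}=\operatorname{diag}(\gamma\mathbf{I}+{\mathbf{G}}^\T{\mathbf{G}},\,\mathbf{I}/\gamma)$ is positive definite, so $\|\cdot\|_{\mathbf{\Omega}}$ is a genuine norm. Writing $a_k:=\|(\mathbf{v}^{(k)},\bm{\eta}^{(k)})-(\mathbf{v}^\star,\bm{\eta}^\star)\|_{\mathbf{\Omega}}^2$ and $r_k:=\|(\mathbf{v}^{(k)},\bm{\eta}^{(k)})-(\mathbf{v}^{(k+1)},\bm{\eta}^{(k+1)})\|_{\mathbf{\Omega}}^2$, Lemma~\ref{lemma:s_sequence} reads $a_{k+1}\le a_k-r_k$. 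Hence $\{a_k\}$ is non-increasing and bounded below, so it converges and $(\mathbf{v}^{(k)},\bm{\eta}^{(k)})$ is bounded; telescoping gives $\sum_{k\ge 0} r_k\le a_0<\infty$, so $r_k\to 0$.

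Second, to upgrade $r_k\to0$ to the rate $r_k=o(1/k)$, I would show that the residual sequence $\{r_k\}$ is itself non-increasing. This is obtained by re-running the argument behind Lemma~\ref{lemma:s_sequence} with the previous iterate $(\mathbf{v}^{(k)},\bm{\eta}^{(k)})$ playing the role of the reference point instead of $(\mathbf{v}^\star,\bm{\eta}^\star)$: the variational inequalities that characterise the $\mathbf{v}$- and $\bm{\eta}$-updates \eqref{eq:v-primal}--\eqref{eq:eta-primal} at steps $k$ and $k+1$, together with monotonicity of the subdifferential of the convex objective, yield $r_{k+1}\le r_k$ after the usual cross-term cancellations (a He--Yuan type estimate). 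Once $\{r_k\}$ is non-negative, summable, and non-increasing, the elementary bound $k\,r_k\le 2\sum_{j=\lceil k/2\rceil}^{k} r_j\to 0$ gives $r_k=o(1/k)$, which is the asserted rate for the mADMM residual (equivalently, for the primal--dual infeasibility, by \eqref{eq:eta-primal}).

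Third, I would identify any limit point as a stationary point. From $r_k\to0$ and \eqref{eq:eta-primal} the feasibility residual $[\mathbf{u}_t^{(k+1)};\mathbf{w}_t^{(k+1)}]-[\mathbf{I};{\mathbf{G}}_t]\mathbf{v}_t^{(k+1)}\to\mathbf{0}$. Because $\mathbf{Q}_t,\mathbf{P}_1\succeq\mathbf{0}$ makes the coefficient matrix $\mathbf{H}^\T\mathbf{R}^{-1}\mathbf{H}+\mathbf{A}^\T\mathbf{Q}^{-1}\mathbf{A}+\gamma\mathbf{\Phi}^\T\mathbf{\Phi}$ invertible, the $\mathbf{x}$-update \eqref{eq:x_linear_solution} and the $\mathbf{w}$- and $\mathbf{v}$-updates \eqref{eq:wt_update}--\eqref{eq:vt_update} are continuous functions of the bounded quantities $(\mathbf{v}^{(k)},\bm{\eta}^{(k)})$, so $\{\mathbf{x}_{1:T}^{(k)}\}$ and $\{\mathbf{w}_{1:T}^{(k)}\}$ are bounded and, along any convergent subsequence, the first-order optimality conditions of \eqref{eq:x-primal}, \eqref{eq:w-primal}, \eqref{eq:v-primal} pass to the limit (using $r_k\to0$ to discard the $\mathbf{v}^{(k)}\!-\!\mathbf{v}^{(k+1)}$ and $\bm{\eta}^{(k)}\!-\!\bm{\eta}^{(k+1)}$ terms). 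Collecting these relations reproduces the KKT system of \eqref{eq:linear_optimisation}, so the subsequential limit is a stationary point.

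Finally, to promote subsequential convergence to convergence of the whole sequence, I would apply Lemma~\ref{lemma:s_sequence} once more with this particular stationary point as $(\mathbf{v}^\star,\bm{\eta}^\star)$: the sequence $\|(\mathbf{v}^{(k)},\bm{\eta}^{(k)})-(\mathbf{v}^\star,\bm{\eta}^\star)\|_{\mathbf{\Omega}}$ is non-increasing and has $0$ as a subsequential limit, hence converges to $0$; continuity of \eqref{eq:x_linear_solution} and \eqref{eq:wt_update} then forces $\mathbf{x}_{1:T}^{(k)}\to\mathbf{x}_{1:T}^\star$ and $\mathbf{w}_{1:T}^{(k)}\to\mathbf{w}_{1:T}^\star$. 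The step I expect to be the main obstacle is the monotonicity $r_{k+1}\le r_k$: Lemma~\ref{lemma:s_sequence} alone only delivers summability, and extracting the $o(1/k)$ rate genuinely requires chaining the variational inequalities of two consecutive mADMM sweeps; a secondary subtlety is that only positive \emph{semi}-definiteness of $\mathbf{Q}_t$ and $\mathbf{P}_1$ is assumed, so one must work with the invertibility of the full Gram-type matrix rather than of $\mathbf{Q}_t$ itself.
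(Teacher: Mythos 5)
Your proposal is correct in outline but takes a genuinely different route from the paper's own proof. The paper follows the ergodic variational-inequality argument of He--Yuan: it sums the inequality \eqref{eq:lemma_step_8} over iterations, divides by $k+1$, and uses convexity of $\Xi$ to bound the VI gap at the \emph{averaged} iterate $\bar{\bm{\zeta}}^{(k)}$ by $\tfrac{1}{k+1}\|\mathbf{s}^\star-\mathbf{s}^{(0)}\|_{\mathbf{\Omega}}^2$ (equations \eqref{eq:lemma_step_9}--\eqref{eq:lemma_step_10}); strictly speaking this delivers an $O(1/k)$ ergodic bound, and the paper does not separately carry out the identification of limit points or the convergence of the whole iterate sequence. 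You instead argue non-ergodically: Fej\'er monotonicity from Lemma~\ref{lemma:s_sequence} gives boundedness of $(\mathbf{v}^{(k)},\bm{\eta}^{(k)})$ and summability of the residuals $r_k$; monotonicity $r_{k+1}\le r_k$ upgrades summability to $r_k=o(1/k)$ via $k\,r_k\le 2\sum_{j\ge \lceil k/2\rceil} r_j$; and you close with limit-point identification through the optimality conditions and a second application of Fej\'er monotonicity to get full-sequence convergence. Your route is the standard way to obtain a genuine non-ergodic $o(1/k)$ rate and is, if anything, closer to the literal content of the theorem (pointwise convergence of the iterates to a stationary point at rate $o(1/k)$) than the paper's ergodic estimate.

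The one substantive gap is that the key step $r_{k+1}\le r_k$ is announced as a plan rather than proved, and you rightly flag it as the main obstacle. It does hold here, but only because of the structural fact the paper itself invokes: the coefficient matrices of the two primal blocks satisfy $[\mathbf{\Phi} \quad \mathbf{0}][\mathbf{0} \quad \mathbf{I}]^\T=\mathbf{0}$, so the three-block scheme \eqref{eq:subproblem_dynamic} collapses to a two-block ADMM in $(\bm{\zeta},\mathbf{v})$ for which the He--Yuan contraction of successive differences in the $\|\cdot\|_{\mathbf{\Omega}}$ metric is available; without this orthogonality, multi-block ADMM need not converge at all, so the reduction is not optional and must be stated before chaining the variational inequalities at iterations $k$ and $k+1$. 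A secondary point: in the limit-passing step you use invertibility of $\mathbf{H}^\T\mathbf{R}^{-1}\mathbf{H}+\mathbf{A}^\T\mathbf{Q}^{-1}\mathbf{A}+\gamma\mathbf{\Phi}^\T\mathbf{\Phi}$; since $\mathbf{Q}_t$ and $\mathbf{P}_1$ are only assumed semi-definite, this rests on $\gamma\mathbf{\Phi}^\T\mathbf{\Phi}\succ\mathbf{0}$, which is true because $\mathbf{\Phi}$ in \eqref{eq:linear_vector_sets} is unit lower triangular, and you should say so explicitly rather than leave it implicit.
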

	
	\begin{proof}
		The proof is based on the convexity of the function. Because of the equivalence between mADMM and KS-mADMM, we start by establishing the convergence of mADMM. When $\mathbf{Q}_t$ and $ \mathbf{P}_1$ are positive semi-definite, the function in \eqref{eq:linear_optimisation} is convex. Because of $[\mathbf{\Phi} \quad \mathbf{0}] [\mathbf{0} \quad \mathbf{I}]^\T = \mathbf{0}$, we can write $\mathbf{x}$ and $\mathbf{w}$ into a function $\Xi(\bm{\zeta})$ in the batch form \cite{He2012convergence}. 
		
		For simplicity of notation, we define $\mathbf{s} = \begin{bmatrix} \mathbf{v}  & \bm{\eta} \end{bmatrix}^\top$. Using Lemma \ref{lemma:s_sequence}, we obtain  
		\begin{equation}\label{eq:lemma_step_8}
			\begin{split}
				\begin{aligned}
					&\Xi(\bm{\zeta}^\star) - \Xi(\bm{\zeta}^{(k)})  + (\bm{\xi}^\star - \bm{\xi}^{(k)})^\T   F(\bm{\xi}^\star) \\ 
					& \quad +\| \mathbf{s}^\star -  \mathbf{s}^{(k)} \|_{\mathbf{\Omega}}^2 
					\geq 
					\| \mathbf{s}^\star -  \mathbf{s}^{(k+1)} \|_{\mathbf{\Omega}}^2, 
				\end{aligned}
			\end{split}
		\end{equation}
		where $\bm{\xi}$ and $F(\bm{\xi})$ are defined in Appendix \ref{pf:s_sequence} (see \eqref{eq:subproblem_linear_gradient}).
		We sum the inequality \eqref{eq:lemma_step_8} from $0$ to $k$, and divide each term by $k+1$. Since $\| \mathbf{s}^\star -  \mathbf{s}^{(k+1)} \|_{\mathbf{\Omega}}^2 \geq 0$, we then have
		\begin{equation}\label{eq:lemma_step_9}
			\begin{split}
				\begin{aligned}
					&\frac{1}{k+1} \sum_{i = 0}^{k} \Xi(\bm{\zeta}^{(k)}) -  \Xi(\bm{\zeta}^\star)\\
					&+ \big( \frac{1}{k+1} \sum_{i = 0}^{k}\bm{\xi}^{(k)} - \bm{\xi}^\star \big)^\T   F(\bm{\xi}^\star) 
					\leq \frac{1}{k+1} \| \mathbf{s}^\star -  \mathbf{s}^{(0)} \|_{\mathbf{\Omega}}^2.
				\end{aligned}
			\end{split}
		\end{equation}
		Let $\bar{\bm{\zeta}}^{(k)} = \frac{1}{k+1}\bm{\zeta}^{(k)}$ and $\bar{\bm{\xi}}^{(k)} = \frac{1}{k+1}\bm{\xi}^{(k)}$. 
		Because of the convexity of $\Xi$, we further write \eqref{eq:lemma_step_9} as
		\begin{equation}\label{eq:lemma_step_10}
			\begin{split}
				\begin{aligned}
					\Xi(\bar{\bm{\zeta}}^{(k)})  -  \Xi(\bm{\zeta}^\star)
					+ ( \bar{\bm{\xi}}^{(k)} - \bm{\xi}^\star)^\T   F(\bm{\xi}^\star) 
					\leq \frac{1}{k+1} \| \mathbf{s}^\star -  \mathbf{s}^{(0)} \|_{\mathbf{\Omega}}^2.
				\end{aligned}
			\end{split}
		\end{equation}
		
		The convergence rate ${o}(\frac{1}{k})$ of mADMM is thus established. As batch mADMM is equivalent to KS-mADMM, then the sequence $\{\mathbf{x}^{(k)},\mathbf{w}^{(k)}, \mathbf{v}^{(k)},\bm{\eta}^{(k)}\}$ and the sequence $\{\mathbf{x}_{1:T}^{(k)},\mathbf{w}_{1:T}^{(k)}, \mathbf{v}_{1:T}^{(k)},\bm{\eta}_{1:T}^{(k)}\}$ are identical. This concludes the proof.
	\end{proof}
	
	When the functions $\mathbf{\mathbf{a}}_t(\mathbf{x}_{t-1})$ and $\mathbf{h}_t(\mathbf{x}_{t})$ are nonlinear, we have the function
	\begin{equation}\label{eq:Lagrangian_nonlinear_con}
		\begin{aligned}
			s(\mathbf{x}) \triangleq \frac{1}{2}\| \mathbf{y} -\mathbf{h}(\mathbf{x})  \|_{\mathbf{R}^{-1}}^2 + \frac{1}{2} \left \| \mathbf{m} - \mathbf{a}(\mathbf{x}) \right \|_{\mathbf{Q}^{-1}}^2.
		\end{aligned}
	\end{equation}

	\begin{definition} The function $ s(\mathbf{x}) $ is $\textit{strongly amenable}$ \cite{Rockafellar1998Variational} at $\mathbf{x}$ when the condition
		\begin{equation}\label{eq:amenable_condidtion}
			\begin{split}
				\begin{aligned}
					\begin{bmatrix} 
						\mathbf{R}^{-1/2} \mathbf{J}_{h} ; \mathbf{Q}^{-1/2}\mathbf{J}_{a}\end{bmatrix}^\top \mathbf{z} = \mathbf{0}
				\end{aligned}
			\end{split}
		\end{equation}
		is satisfied only when $\mathbf{z}$ is zero. 
	\end{definition}

	Let $s(\mathbf{x})$ be strongly amenable. Then, $s(\mathbf{x})$ will be \textit{prox-regular} \cite{Poliquin1996Prox} . We are now ready for introducing the following lemma.

	\begin{lemma} [Bounded and nonincreasing sequence]
		\label{lemma:nonincreasing_bounded} Assume that $\delta_{+}(\mathbf{\Phi}^\T \mathbf{\Phi}) > 0$
		and $s(\mathbf{x}) $ is strongly amenable. Then there exists $\gamma > 0$ such that
		sequence $\mathcal{L}_\gamma(\mathbf{x}^{(k)}, \mathbf{w}^{(k)},\mathbf{v}^{(k)}; \bm{\eta}^{(k)})$ is bounded and nonincreasing. 
	\end{lemma}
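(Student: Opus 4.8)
The plan is to follow the standard descent-lemma strategy for nonconvex ADMM (in the spirit of Hong, Luo, and Razaviyayn), adapting it to the prox-regular, non-Lipschitz setting afforded by strong amenability of $s(\mathbf{x})$. The augmented Lagrangian $\mathcal{L}_\gamma$ is updated in four blocks per mADMM iteration: the $\mathbf{x}_{1:T}$-step \eqref{eq:x-primal}, the $\mathbf{w}_{1:T}$-step \eqref{eq:w-primal}, the $\mathbf{v}_{1:T}$-step \eqref{eq:v-primal}, and the dual step \eqref{eq:eta-primal}. First I would quantify the decrease contributed by each of the three primal minimisations. The $\mathbf{v}_{1:T}$-subproblem minimises a strongly convex quadratic with Hessian $\gamma(\mathbf{I}+\mathbf{G}_t^\T\mathbf{G}_t)$, so it yields a decrease at least $\tfrac{\gamma}{2}\delta_{+}(\mathbf{I}+\mathbf{G}^\T\mathbf{G})\|\mathbf{v}^{(k+1)}-\mathbf{v}^{(k)}\|_2^2$. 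The $\mathbf{w}_{1:T}$-step likewise gives a nonnegative decrease (the proximal objective is $\gamma$-strongly convex in $\mathbf{w}_t$). For the $\mathbf{x}_{1:T}$-step, the key is that the quadratic penalty $\tfrac{\gamma}{2}\sum_t\|\mathbf{u}_t-\mathbf{v}_t^{(k)}+\overline{\bm{\eta}}_t^{(k)}/\gamma\|_2^2$ contributes curvature $\gamma\,\mathbf{\Phi}^\T\mathbf{\Phi}$; under the assumption $\delta_{+}(\mathbf{\Phi}^\T\mathbf{\Phi})>0$, and since $s(\mathbf{x})$ is prox-regular (hence its subdifferential is well-behaved and the sum is strongly convex on the relevant region once $\gamma$ is large enough), the $\mathbf{x}$-update produces a decrease of order $\|\mathbf{x}^{(k+1)}-\mathbf{x}^{(k)}\|_2^2$.

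Next I would control the ascent caused by the dual update. From \eqref{eq:eta-primal}, $\mathcal{L}_\gamma(\mathbf{x}^{(k+1)},\mathbf{w}^{(k+1)},\mathbf{v}^{(k+1)};\bm{\eta}^{(k+1)})-\mathcal{L}_\gamma(\mathbf{x}^{(k+1)},\mathbf{w}^{(k+1)},\mathbf{v}^{(k+1)};\bm{\eta}^{(k)})=\tfrac{1}{\gamma}\|\bm{\eta}^{(k+1)}-\bm{\eta}^{(k)}\|_2^2$. The crux is therefore to bound $\|\bm{\eta}^{(k+1)}-\bm{\eta}^{(k)}\|_2^2$ by the primal increments. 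This is where strong amenability of $s(\mathbf{x})$ enters decisively: the optimality condition of the $\mathbf{x}$-subproblem expresses the part of $\bm{\eta}^{(k+1)}$ coupled through $\mathbf{\Phi}$ in terms of $\nabla s$ (or a subgradient element) evaluated at the new iterate, and condition \eqref{eq:amenable_condidtion} guarantees that the map $\mathbf{z}\mapsto[\mathbf{R}^{-1/2}\mathbf{J}_h;\mathbf{Q}^{-1/2}\mathbf{J}_a]^\T\mathbf{z}$ is injective, so a left inverse exists and the dual increment is Lipschitz-controlled by the increments of $(\mathbf{x},\mathbf{v})$. Combining this with the optimality conditions for the $\mathbf{w}$- and $\mathbf{v}$-steps yields $\|\bm{\eta}^{(k+1)}-\bm{\eta}^{(k)}\|_2^2 \le C\big(\|\mathbf{x}^{(k+1)}-\mathbf{x}^{(k)}\|_2^2+\|\mathbf{w}^{(k+1)}-\mathbf{w}^{(k)}\|_2^2+\|\mathbf{v}^{(k+1)}-\mathbf{v}^{(k)}\|_2^2\big)$ for a constant $C$ independent of $\gamma$.

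Assembling the pieces, the net change over one full iteration is bounded above by
\begin{equation*}
\mathcal{L}_\gamma^{(k+1)} - \mathcal{L}_\gamma^{(k)} \le -\Big(c_1\gamma - \tfrac{C}{\gamma}\Big)\big(\|\mathbf{x}^{(k+1)}-\mathbf{x}^{(k)}\|_2^2+\|\mathbf{w}^{(k+1)}-\mathbf{w}^{(k)}\|_2^2+\|\mathbf{v}^{(k+1)}-\mathbf{v}^{(k)}\|_2^2\big),
\end{equation*}
where $c_1>0$ depends on $\delta_{+}(\mathbf{\Phi}^\T\mathbf{\Phi})$ and $\delta_{+}(\mathbf{I}+\mathbf{G}^\T\mathbf{G})$. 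Choosing $\gamma$ large enough that $c_1\gamma - C/\gamma > 0$ makes the sequence $\{\mathcal{L}_\gamma^{(k)}\}$ nonincreasing. Boundedness from below then follows by rewriting $\mathcal{L}_\gamma^{(k)}$ after completing the square on the penalty and multiplier terms: for $\gamma$ sufficiently large the coercivity induced by $\gamma\mathbf{\Phi}^\T\mathbf{\Phi}$ together with the data-fit terms (which are bounded below by $0$) dominates the cross terms, so $\mathcal{L}_\gamma^{(k)} \ge \underline{\mathcal{L}} > -\infty$; being nonincreasing and bounded below, the sequence is also bounded. The main obstacle is the dual-increment bound: without Lipschitz continuity of $\nabla s$, one must invoke prox-regularity so that the subgradient selections $\nabla s(\mathbf{x}^{(k+1)})$ stay locally Lipschitz along the (bootstrap-bounded) iterate trajectory, and use strong amenability to invert the Jacobian stack — care is needed to make this a genuine \emph{a priori} estimate rather than a circular one, which I would handle by an induction that simultaneously maintains boundedness of the iterates and validity of the local Lipschitz constant.
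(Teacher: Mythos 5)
Your overall skeleton is the same as the paper's: a per-block sufficient-descent estimate on $\mathcal{L}_\gamma$ in which strong amenability supplies prox-regularity of $s$ (with constant $M$), the curvature $\gamma\,\delta_{+}(\mathbf{\Phi}^\T\mathbf{\Phi})$ from the quadratic penalty absorbs the $-M$ defect once $\gamma>M/\delta_{+}(\mathbf{\Phi}^\T\mathbf{\Phi})$, the dual step contributes the ascent term $\tfrac{1}{\gamma}\|\bm{\eta}^{(k+1)}-\bm{\eta}^{(k)}\|^2$, and the lower bound comes from the nonnegativity of $s(\mathbf{x})+\sum_{t,g}\mu\|\mathbf{w}_{g,t}\|_2$ (cf.\ \eqref{eq:lang_1}--\eqref{eq:q_case_a}). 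Where you differ is that you actually try to close the loop on the dual term by bounding $\|\bm{\eta}^{(k+1)}-\bm{\eta}^{(k)}\|^2$ by the primal increments via the stationarity condition of \eqref{eq:x-primal} and the injectivity in \eqref{eq:amenable_condidtion}; the paper's proof stops at inequality \eqref{eq:q_case_a} and asserts monotonicity from the condition on $\gamma$ alone, leaving $\tfrac{1}{\gamma}\|\bm{\eta}^{(k+1)}-\bm{\eta}^{(k)}\|^2$ uncontrolled. In that sense your plan is the more honest version of the same argument.

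The genuine gap is the one you yourself flag and then defer: the estimate $\|\bm{\eta}^{(k+1)}-\bm{\eta}^{(k)}\|^2\le C(\|\Delta\mathbf{x}\|^2+\|\Delta\mathbf{w}\|^2+\|\Delta\mathbf{v}\|^2)$ requires a Lipschitz-type bound on the (sub)gradient of $s$ along the trajectory, which prox-regularity gives only locally and only on a set you have not yet shown the iterates remain in; the proposed induction that simultaneously maintains boundedness of the iterates and validity of the local constant is precisely the content that must be written out, and without it the inequality $c_1\gamma-C/\gamma>0$ has no certified $C$. Two further points need care. First, note that to pass from $\mathbf{\Phi}^\T(\bm{\eta}^{(k+1)}-\bm{\eta}^{(k)})$ back to $\bm{\eta}^{(k+1)}-\bm{\eta}^{(k)}$ you need invertibility of $\mathbf{\Phi}$ (true here, since $\mathbf{\Phi}$ in \eqref{eq:linear_vector_sets} is unit lower block-bidiagonal), not the injectivity of the Jacobian stack in \eqref{eq:amenable_condidtion}, which instead controls the prox-regularity of $s$; your attribution conflates the two. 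Second, your lower-bound argument (and the paper's) implicitly needs the multipliers to stay bounded, since completing the square on $\langle\bm{\eta},\cdot\rangle+\tfrac{\gamma}{2}\|\cdot\|^2$ leaves a $-\tfrac{1}{2\gamma}\|\bm{\eta}\|^2$ remainder; this should be folded into the same induction.
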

	
	\begin{proof}
		See Appendix \ref{pf:lemma-nonincreasing}. 
	\end{proof}

	Next, we present the main theoretical result.

	\begin{theorem}[Convergence of GN-IEKS-mADMM]
		\label{theorem:ieks_madmm}
		Let the assumptions in Lemma~\ref{lemma:nonincreasing_bounded} be satisfied.
		Then there exists $\gamma > 0$ such that the sequence $\left\{\mathbf{x}^{(k)}_{1:T}, \mathbf{w}^{(k)}_{1:T},\right.$$\left.\mathbf{v}^{(k)}_{1:T}, \bm{\eta}^{(k)}_{1:T}\right\}$ generated by GN-IEKS-mADMM locally converges to a local minimum.
	\end{theorem}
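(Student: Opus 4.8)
The plan is to follow the now-standard route for convergence of nonconvex ADMM, with the sufficient-decrease property already in hand from Lemma~\ref{lemma:nonincreasing_bounded}. First I would note that, since $\mathcal{L}_\gamma(\mathbf{x}^{(k)}_{1:T},\mathbf{w}^{(k)}_{1:T},\mathbf{v}^{(k)}_{1:T};\bm{\eta}^{(k)}_{1:T})$ is nonincreasing and bounded below, it converges; the per-iteration decrease used to prove that lemma is, for $\gamma$ large enough that the quadratic in $\mathbf{x}_{1:T}$ is strongly convex (which is where $\delta_{+}(\mathbf{\Phi}^\T\mathbf{\Phi})>0$ enters), bounded below by a positive multiple of $\|\mathbf{x}^{(k+1)}_{1:T}-\mathbf{x}^{(k)}_{1:T}\|^2+\|\mathbf{w}^{(k+1)}_{1:T}-\mathbf{w}^{(k)}_{1:T}\|^2+\|\mathbf{v}^{(k+1)}_{1:T}-\mathbf{v}^{(k)}_{1:T}\|^2$, so telescoping forces all three successive differences to zero. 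Boundedness of the primal iterates then follows from boundedness of $\mathcal{L}_\gamma$ together with the coercivity in $\mathbf{x}_{1:T}$ furnished again by the full column rank of $\mathbf{\Phi}$ (the same estimate that appears inside the proof of Lemma~\ref{lemma:nonincreasing_bounded}).

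Second, I would bound the dual sequence. Writing the first-order optimality condition of the $\mathbf{w}$-subproblem \eqref{eq:w-primal} and eliminating the stale term with the dual update \eqref{eq:eta-primal} gives $\underline{\bm{\eta}}^{(k+1)}_{g,t}\in-\mu\,\partial\|\mathbf{w}^{(k+1)}_{g,t}\|_2+\gamma\mathbf{G}_{g,t}\bigl(\mathbf{v}^{(k)}_t-\mathbf{v}^{(k+1)}_t\bigr)$, which is bounded because subgradients of $\|\cdot\|_2$ have norm at most one and the $\mathbf{v}$-increments vanish. Doing the same with the stationarity condition of the $\mathbf{x}_{1:T}$-subproblem \eqref{eq:x-primal} (which GN-IEKS solves to stationarity on each linearised model, so the limiting iterate is a genuine critical point of the subproblem) yields $\mathbf{\Phi}^\T\overline{\bm{\eta}}^{(k+1)}=-\nabla\bigl(s(\mathbf{x}^{(k+1)}_{1:T})+\tfrac12\|\mathbf{x}^{(k+1)}_1-\mathbf{m}_1\|_{\mathbf{P}_1^{-1}}^2\bigr)+\gamma\mathbf{\Phi}^\T\bigl(\mathbf{v}^{(k)}_{1:T}-\mathbf{v}^{(k+1)}_{1:T}\bigr)$; invertibility of $\mathbf{\Phi}$ (from $\delta_{+}(\mathbf{\Phi}^\T\mathbf{\Phi})>0$) recovers $\overline{\bm{\eta}}^{(k+1)}$ uniquely, and it stays bounded because $\nabla s$ is continuous and the primal iterates are bounded.

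Third, with all increments vanishing and the iterate sequence bounded, I would extract a convergent subsequence with limit $(\mathbf{x}^\star_{1:T},\mathbf{w}^\star_{1:T},\mathbf{v}^\star_{1:T},\bm{\eta}^\star_{1:T})$, pass to the limit in the three subproblem optimality conditions together with \eqref{eq:eta-primal}, and---using that strong amenability of $s$ makes it prox-regular, hence its limiting subdifferential is outer semicontinuous---conclude that the limit satisfies the KKT system of \eqref{eq:whole_optimisation}, i.e.\ it is a stationary point (this mirrors the affine argument behind Theorem~\ref{theorem_admm}, with convexity replaced by amenability). To obtain convergence of the \emph{entire} sequence and identify the limit as a \emph{local minimum}, I would invoke the Kurdyka--\L{}ojasiewicz property of $\mathcal{L}_\gamma$ (a sum of the amenable/piecewise-polynomial data term and the group-norm penalty): the sufficient-decrease inequality and the bounded-subgradient estimate established above are exactly the hypotheses of the abstract KL convergence theorem, which upgrades the subsequential limit to a genuine sequence limit, while prox-regularity ensures that in a neighbourhood of a local minimiser the $\mathbf{x}_{1:T}$-subproblems and the augmented Lagrangian behave like their convex counterparts, so that an initialisation inside such a neighbourhood stays there and the limit is that local minimiser. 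Finally I would record that GN-IEKS-mADMM is mathematically equivalent to the batch mADMM (as in the proof of Theorem~\ref{theorem_admm}), so the same conclusions hold for the recursive-smoother implementation.

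The hard part will be twofold. The first difficulty is controlling $\overline{\bm{\eta}}^{(k)}$ and the primal iterates in the nonconvex regime: without the full column rank of $\mathbf{\Phi}$ the dual update cannot be inverted, and without strong amenability the subgradient of $s$ need not stay bounded along the sequence, so the assumptions $\delta_{+}(\mathbf{\Phi}^\T\mathbf{\Phi})>0$ and amenability are doing essential work, and making this quantitative---in particular pinning down the threshold for $\gamma$---is the delicate estimate. The second is the bookkeeping at the interface of the inner GN-IEKS loop and the outer mADMM loop: the limiting-KKT argument needs the $\mathbf{x}_{1:T}$-update to return a stationary point of its subproblem rather than an arbitrary inner iterate, so one must either run GN-IEKS to convergence or propagate an inner-accuracy tolerance through all the estimates; together with verifying the KL property and prox-regularity at the candidate limit, this is where most of the technical effort goes.
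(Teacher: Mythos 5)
Your proposal is correct in spirit but follows a genuinely different --- and far more detailed --- route than the paper. The paper's own proof is only a few lines: it invokes Lemma~\ref{lemma:nonincreasing_bounded} to get that $\mathcal{L}_\gamma(\mathbf{x}^{(k)},\mathbf{w}^{(k)},\mathbf{v}^{(k)};\bm{\eta}^{(k)})$ is bounded and nonincreasing, cites prior work for the existence of a local minimum of the $\mathbf{x}$-subproblem, notes that the $\mathbf{w}$- and $\mathbf{v}$-subproblems are convex, directly ``concludes'' convergence of the iterates to a local minimum, and then transfers the result to the recursive implementation via the equivalence of GN and IEKS. It does not carry out the sufficient-decrease telescoping, the vanishing of successive differences, the dual-boundedness estimate via the subproblem optimality conditions, the subsequential KKT limit, or any Kurdyka--\L{}ojasiewicz argument --- all of which you supply. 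What your approach buys is genuine rigour: convergence of the augmented-Lagrangian \emph{values} plus existence of subproblem minimisers does not by itself yield convergence of the \emph{iterates}, and the machinery you describe (bounding $\overline{\bm{\eta}}^{(k)}$ through $\delta_{+}(\mathbf{\Phi}^\T\mathbf{\Phi})>0$, outer semicontinuity of the subdifferential from prox-regularity, KL to upgrade subsequential to full convergence) is exactly what is needed to close that gap; you have also correctly identified the inner/outer-loop interface (GN-IEKS must return a stationary point of the linearised subproblem) as a point the argument must address, which the paper passes over. The cost is that your route requires hypotheses (the KL property of $\mathcal{L}_\gamma$, exact or tolerance-controlled inner solves) that the paper never states, so it proves a slightly different --- and more carefully delimited --- theorem. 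The two proofs agree on the single shared ingredient, namely Lemma~\ref{lemma:nonincreasing_bounded} as the engine of descent, and on the final transfer step via the GN/IEKS equivalence.
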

	\begin{proof}
		By Lemma~\ref{lemma:nonincreasing_bounded}, the sequence $\mathcal{L}_{\gamma}(\mathbf{x}^{(k)}, \mathbf{w}^{(k)}, \mathbf{v}^{(k)}; \bm{\eta}^{(k)})$ is bounded and nonincreasing. 
		Based on our paper \cite{Gao2019ieks}, the $\mathbf{x}$-subproblem has a local minimum $\mathbf{x}^{\star}$. The $\mathbf{w}$ and $\mathbf{v}$ subproblems are convex \cite{boyd2004Convex}. We then conclude that the iterative sequence $\{\mathbf{x}^{(k)}, \mathbf{w}^{(k)}, \mathbf{v}^{(k)}, \bm{\eta}^{(k)}\}$ locally converges to a local minimum $(\mathbf{x}^\star, \mathbf{w}^\star, \mathbf{v}^\star,\bm{\eta}^\star)$. According to\cite{Bell1994smoother}, GN is equivalent to IEKS. Thus we deduce that the iterative sequence $\{\mathbf{x}^{(k)}_{1:T}, \mathbf{w}^{(k)}_{1:T}, \mathbf{v}^{(k)}_{1:T}, \bm{\eta}^{(k)}_{1:T}\}$ is convergent to a local minimum $(\mathbf{x}^\star_{1:T}, \mathbf{w}^\star_{1:T},\mathbf{v}^\star_{1:T}, \bm{\eta}^\star_{1:T})$.
	\end{proof}

	\begin{lemma}[Convergence of LM]
		\label{lemma:LM}
		Assume that the norm of Hessian $\mathbf{H}_{\theta} (\mathbf{x})$ is bounded by a positive constant $\kappa < \max \left\{\gamma \delta_{+}(\mathbf{\Phi}^\top \mathbf{\Phi}),\lambda^{(i)} \delta_{+}([\mathbf{S}^{(i)}]^{-1}) \right\}$. Then LM is locally (linearly) convergent. The convergence is quadratic when $ \kappa \to 0$.
	\end{lemma}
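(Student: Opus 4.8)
The plan is to establish local (linear) convergence of the Levenberg--Marquardt iteration \eqref{eq:x_LM_iteration} for minimising $\theta(\mathbf{x})$ by treating LM as a perturbation of Newton's method and invoking a standard contraction argument. First I would note that, near a local minimiser $\mathbf{x}^\star$ of $\theta$, the iteration map can be written as $\mathbf{x}^{(i+1)} = \mathbf{x}^{(i)} - \bigl(\mathbf{J}_\theta^\T\mathbf{J}_\theta(\mathbf{x}^{(i)}) + \lambda^{(i)}[\mathbf{S}^{(i)}]^{-1}\bigr)^{-1}\nabla\theta(\mathbf{x}^{(i)})$, since the bracketed right-hand side of \eqref{eq:x_LM_iteration} is exactly $\mathbf{x}^{(i)}$ minus that preconditioned gradient (the data-fit and prior linearisations collapse to gradient terms, and $\gamma\mathbf{\Phi}^\T\mathbf{\Phi}$ together with $\lambda^{(i)}[\mathbf{S}^{(i)}]^{-1}$ are part of the approximate Hessian $\mathbf{M}^{(i)} \triangleq \mathbf{J}_\theta^\T\mathbf{J}_\theta(\mathbf{x}^{(i)}) + \lambda^{(i)}[\mathbf{S}^{(i)}]^{-1}$). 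The exact Hessian of $\theta$ is $\mathbf{H}_\theta = \mathbf{J}_\theta^\T\mathbf{J}_\theta + \mathbf{E}$, where $\mathbf{E}$ collects the second-derivative contributions of $\mathbf{h}$ and $\mathbf{a}$; the hypothesis bounds $\|\mathbf{E}\|$ (equivalently $\|\mathbf{H}_\theta\|$ in the residual part) by $\kappa$.

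The key steps, in order, are: (i) show $\mathbf{M}^{(i)}$ is uniformly invertible near $\mathbf{x}^\star$ with $\|(\mathbf{M}^{(i)})^{-1}\|$ bounded, using $\mathbf{M}^{(i)} \succeq \max\{\gamma\,\delta_{+}(\mathbf{\Phi}^\T\mathbf{\Phi}),\,\lambda^{(i)}\,\delta_{+}([\mathbf{S}^{(i)}]^{-1})\}\,\mathbf{I} \succ \kappa\,\mathbf{I}$, which is where the assumption on $\kappa$ enters; (ii) subtract the stationarity condition $\nabla\theta(\mathbf{x}^\star) = \mathbf{0}$ and Taylor-expand $\nabla\theta$ around $\mathbf{x}^\star$ to get $\mathbf{x}^{(i+1)} - \mathbf{x}^\star = (\mathbf{M}^{(i)})^{-1}\bigl(\mathbf{M}^{(i)} - \mathbf{H}_\theta(\xi^{(i)})\bigr)(\mathbf{x}^{(i)} - \mathbf{x}^\star)$ for some $\xi^{(i)}$ on the segment; (iii) bound $\|\mathbf{M}^{(i)} - \mathbf{H}_\theta\| \le \|\mathbf{E}\| + \lambda^{(i)}\|[\mathbf{S}^{(i)}]^{-1}\| + (\text{Jacobian-continuity terms})$, so that the contraction factor is $\rho^{(i)} = \|(\mathbf{M}^{(i)})^{-1}\|\,\|\mathbf{M}^{(i)} - \mathbf{H}_\theta\|$, which is strictly less than $1$ once $\kappa$ is small relative to the spectral floor of $\mathbf{M}^{(i)}$ and $\mathbf{x}^{(i)}$ is close enough to $\mathbf{x}^\star$; (iv) conclude local linear convergence, and observe that as $\kappa \to 0$ (negligible residual curvature) $\mathbf{M}^{(i)} \to \mathbf{J}_\theta^\T\mathbf{J}_\theta \to \mathbf{H}_\theta$, so the standard Gauss--Newton/Newton argument gives $\|\mathbf{x}^{(i+1)} - \mathbf{x}^\star\| = \mathcal{O}(\|\mathbf{x}^{(i)} - \mathbf{x}^\star\|^2)$, i.e.\ quadratic convergence, provided the damping $\lambda^{(i)}$ is driven to zero as the iterates converge (which is exactly what the adaptation scheme in Algorithm~\ref{alg:LM_EKS} does on successful steps).

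I would also remark that, because of the IEKS/GN equivalence used in Section~\ref{sec:nonlinear_LM_IEKS} and in Theorem~\ref{theorem:ieks_madmm}, the same statement transfers verbatim to LM-IEKS run on the augmented state-space model \eqref{eq:lmeks_pdf_gauss}: the recursive smoother produces exactly the batch LM iterate, so the contraction estimate on $\|\mathbf{x}^{(i)}_{1:T} - \mathbf{x}^\star_{1:T}\|$ is inherited. For completeness one can cite \cite{LM1978,simo2020LMIEKS} for the classical LM local-convergence theory, the present contribution being the explicit $\kappa$ condition tying the regularisation floor to the residual curvature.

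The main obstacle I expect is step (iii): controlling $\|\mathbf{M}^{(i)} - \mathbf{H}_\theta\|$ uniformly. The term $\|\mathbf{E}\| \le \kappa$ is handled by hypothesis, but the damping contribution $\lambda^{(i)}\|[\mathbf{S}^{(i)}]^{-1}\|$ is not automatically small — it has to be absorbed by requiring, as in the statement, that $\lambda^{(i)}\,\delta_{+}([\mathbf{S}^{(i)}]^{-1})$ dominates $\kappa$ so that the \emph{net} perturbation relative to the preconditioner's spectral floor stays below $1$; making the bookkeeping of these two competing roles of $\lambda^{(i)}$ (it both enlarges $\mathbf{M}^{(i)}$, helping invertibility, and enlarges $\|\mathbf{M}^{(i)} - \mathbf{H}_\theta\|$, hurting the contraction) precise is the delicate part. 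The transition to the quadratic regime additionally requires showing the adaptation scheme actually forces $\lambda^{(i)} \to 0$ on a tail of successful iterations near $\mathbf{x}^\star$, which follows because once $\|\mathbf{x}^{(i)} - \mathbf{x}^\star\|$ is small the linearised model predicts the decrease accurately and every step is accepted, so $\lambda^{(i)}$ is repeatedly divided by $\alpha$.
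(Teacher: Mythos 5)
Your proposal follows essentially the same route as the paper's proof in Appendix~C: both lower-bound the smallest eigenvalue of the damped normal matrix $\mathbf{J}_{\theta}^\T \mathbf{J}_{\theta}$ (your $\mathbf{M}^{(i)}$) by $\max\{\gamma\,\delta_{+}(\mathbf{\Phi}^\T\mathbf{\Phi}),\,\lambda^{(i)}\delta_{+}([\mathbf{S}^{(i)}]^{-1})\}$ and then derive the standard approximate-Newton error recursion $\|\mathbf{x}^{(i+1)}-\mathbf{x}^\star\| \le \tfrac{M}{2}\|\mathbf{M}^{-1}\|\,\|\mathbf{x}^{(i)}-\mathbf{x}^\star\|^2 + \|\mathbf{M}^{-1}\mathbf{H}_{\theta}\|\,\|\mathbf{x}^{(i)}-\mathbf{x}^\star\|$, with the hypothesis on $\kappa$ making the linear coefficient strictly less than one and $\kappa\to 0$ giving the quadratic rate. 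The one point where you are more careful than the paper --- tracking the damping contribution $\lambda^{(i)}[\mathbf{S}^{(i)}]^{-1}$ inside $\|\mathbf{M}^{(i)}-\mathbf{H}_{\theta}\|$ and arguing that the adaptation drives $\lambda^{(i)}\to 0$ for the quadratic regime --- is passed over silently in the paper, which folds the damping block into $\mathbf{J}_{\theta}$ and bounds only $\|\mathbf{M}^{-1}\mathbf{H}_{\theta}\|$ with $\mathbf{H}_{\theta}$ playing the role of the neglected second-order residual curvature.
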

	
	\begin{proof}
		See Appendix \ref{pf:lemma-LM}. 
	\end{proof}
	
	\begin{theorem}[Convergence of LM-IEKS-mADMM]
		Let the assumptions of Lemmas~\ref{lemma:nonincreasing_bounded} and \ref{lemma:LM} be satisfied.
		Then there exists $\lambda^{(i)}, \gamma > 0 $ such that the sequence $\{\mathbf{x}^{(k)}_{1:T}, \mathbf{w}^{(k)}_{1:T},\mathbf{v}^{(k)}_{1:T}, \bm{\eta}^{(k)}_{1:T}\}$ generated by \mbox{LM-IEKS-mADMM} converges to a local minimum $(\mathbf{x}^\star_{1:T}, \mathbf{w}^\star_{1:T}, \mathbf{v}^\star_{1:T},\bm{\eta}^\star_{1:T})$. 
	\end{theorem}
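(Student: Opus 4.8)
The plan is to follow the template of the proof of Theorem~\ref{theorem:ieks_madmm}, replacing the Gauss--Newton solver of the primal $\mathbf{x}_{1:T}$-subproblem by the Levenberg--Marquardt solver and invoking Lemma~\ref{lemma:LM} in place of the plain GN--IEKS equivalence. First I would note that the hypotheses of Lemma~\ref{lemma:nonincreasing_bounded} hold ($\delta_{+}(\mathbf{\Phi}^\top \mathbf{\Phi}) > 0$ and $s(\mathbf{x})$ strongly amenable), so for a suitable $\gamma > 0$ the augmented Lagrangian sequence $\mathcal{L}_\gamma(\mathbf{x}^{(k)}, \mathbf{w}^{(k)}, \mathbf{v}^{(k)}; \bm{\eta}^{(k)})$ is bounded below and nonincreasing; it therefore converges, the successive differences of the iterates tend to zero, and the iterate sequence is bounded and hence has at least one accumulation point $(\mathbf{x}^\star, \mathbf{w}^\star, \mathbf{v}^\star, \bm{\eta}^\star)$.

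Next I would analyse each block update at such an accumulation point. The $\mathbf{w}_{1:T}$- and $\mathbf{v}_{1:T}$-subproblems \eqref{eq:w-primal}--\eqref{eq:v-primal} are strongly convex quadratics solved in closed form (by the shrinkage operator and by the linear solve \eqref{eq:vt_update}), so their limits are the exact minimisers of the corresponding partial problems. For the $\mathbf{x}_{1:T}$-subproblem \eqref{eq:x-primal}, the cost $\theta(\mathbf{x})$ of \eqref{eq:gn_function} is minimised by LM: by Lemma~\ref{lemma:LM}, since the Hessian norm is bounded by $\kappa < \max\{\gamma \delta_{+}(\mathbf{\Phi}^\top \mathbf{\Phi}), \lambda^{(i)} \delta_{+}([\mathbf{S}^{(i)}]^{-1})\}$, the damping term makes each approximate cost \eqref{eq:function_app_LM} strictly convex with a unique minimiser and the LM iteration is locally (at least linearly, and quadratically as $\kappa \to 0$) convergent to a stationary point $\mathbf{x}^\star$ of $\theta$. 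Since $s(\mathbf{x})$ is strongly amenable it is prox-regular \cite{Poliquin1996Prox,Rockafellar1998Variational}, so this stationary point of the regularised problem \eqref{eq:x_primal_batch} is a local minimum rather than a saddle.

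Combining the three exact or locally convergent block minimisations with the dual step \eqref{eq:eta-primal} and the vanishing successive differences, I would conclude that the accumulation point satisfies the first-order (KKT) stationarity conditions of \eqref{eq:linear_optimisation} and, by the prox-regularity argument, is a local minimum $(\mathbf{x}^\star, \mathbf{w}^\star, \mathbf{v}^\star, \bm{\eta}^\star)$. Finally, invoking the equivalence between LM and the modified IEKS of \cite{simo2020LMIEKS} (analogous to the GN--IEKS equivalence of \cite{Bell1994smoother} used for Theorem~\ref{theorem:ieks_madmm}), the batch iterates coincide with those produced by LM-IEKS on the augmented state-space model \eqref{eq:lmeks_pdf_gauss}, so the sequence $\{\mathbf{x}_{1:T}^{(k)}, \mathbf{w}_{1:T}^{(k)}, \mathbf{v}_{1:T}^{(k)}, \bm{\eta}_{1:T}^{(k)}\}$ generated by LM-IEKS-mADMM converges to $(\mathbf{x}_{1:T}^\star, \mathbf{w}_{1:T}^\star, \mathbf{v}_{1:T}^\star, \bm{\eta}_{1:T}^\star)$.

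The hard part will be reconciling the inner LM iteration, whose guarantee in Lemma~\ref{lemma:LM} is only local, with the outer mADMM recursion. One must argue that the primal subproblem is solved accurately enough --- e.g., warm-started from the previous outer iterate, or run to $I_{\max}$ with $\lambda^{(i)}$ confined to the admissible range --- so that the descent inequality underlying Lemma~\ref{lemma:nonincreasing_bounded} still holds with an inexact $\mathbf{x}$-update, and that the damping adaptation ($\lambda^{(i)} \leftarrow \lambda^{(i)}/\alpha$ on success, $\lambda^{(i)} \leftarrow \lambda^{(i)} \alpha$ otherwise) never violates the threshold condition on $\kappa$. Making this two-time-scale interplay rigorous, so that the Lyapunov function $\mathcal{L}_\gamma$ keeps decreasing despite the inexact inner solves, is where the main technical care is needed.
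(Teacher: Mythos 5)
Your proposal follows essentially the same route as the paper's own proof: invoke Lemma~\ref{lemma:nonincreasing_bounded} for the bounded nonincreasing augmented Lagrangian, convexity for the $\mathbf{w}$- and $\mathbf{v}$-subproblems, Lemma~\ref{lemma:LM} for the LM solve of the $\mathbf{x}_{1:T}$-subproblem, and the batch-LM/LM-IEKS equivalence to transfer the conclusion to the recursive implementation. The two-time-scale issue you flag (inexact inner LM solves versus the descent of $\mathcal{L}_\gamma$) is a real concern that the paper's terse proof simply does not address, so your write-up is if anything more careful than the original.
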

	
	\begin{proof}
		Similarly to Theorem \ref{theorem:ieks_madmm}, we use Lemma~\ref{lemma:nonincreasing_bounded} to establish 
		that the sequence $\mathcal{L}_{\gamma}(\mathbf{x}^{(k)}, \mathbf{w}^{(k)}, \mathbf{v}^{(k)}; \bm{\eta}^{(k)})$ is bounded and nonincreasing. Due to the convexity, the $\mathbf{w}$ and $\mathbf{v}$ subproblems have a local minimum. 
		By Lemma~\ref{lemma:LM}, the sequence ${\mathbf{x}^{(i)}}$ generated by LM converges to $\mathbf{x}^{\star}$. Then the sequence $\{\mathbf{x}^{(k)}, \mathbf{w}^{(k)}, \mathbf{v}^{(k)}, \bm{\eta}^{(k)}\}$ locally converges to a minimum $(\mathbf{x}^\star, \mathbf{w}^\star, \mathbf{v}^\star,\bm{\eta}^\star)$. Since the sequence $\{\mathbf{x}^{(k)}, \mathbf{w}^{(k)}, \mathbf{v}^{(k)}, \bm{\eta}^{(k)}\}$ generated by LM is identical to $\{\mathbf{x}^{(k)}_{1:T}, \mathbf{w}^{(k)}_{1:T}, \mathbf{v}^{(k)}_{1:T}, \bm{\eta}^{(k)}_{1:T}\}$ generated by LM-IEKS \cite{Bell1994smoother,simo2020LMIEKS}. 
	\end{proof}

	\section{Numerical experiments}
	\label{sec:results}
	
	In this section, we experimentally evaluate the proposed methods in a selection of different applications, including linear target tracking problems, multi-sensor range measurement problems, ship trajectory-tracking, audio restoration, and autonomous vehicle tracking. As for the convergence criteria, we can easily verify that the assumptions for convergence are satisfied for the linear/affine examples in Sections \ref{sec:linear_car}, \ref{sec:linear_Marine}, and \ref{sec:audio}. Additionally, the nonlinear coordinated turn model in Section \ref{sec:urban_car} also satisfies assumptions for convergence. However, for the distance measurement in Section \ref{sec:nonlinear_car}, it is hard to establish the strong amenability although empirically the convergence occurs.
	
	\subsection{Linear Target Tracking Problems}
	\label{sec:linear_car}
	
	In the first experiment, we consider simulated tracking of a moving target (such as car) with the Wiener velocity model~\cite{Bar-Shalom+Li+Kirubarajan:2001} as the dynamic model and with noisy location measurements. In the simulation, the process noise $\mathbf{q}_t$ was set to be zero with probability $0.8$ at every step $t$. The state $\mathbf{x}_t$ has the location $(x_{t,1}, \,x_{t,2})$ and the velocities $(x_{t,3}, \, x_{t,4})$. The measurement model matrix and the measurement noise covariance are
	\begin{equation*}
		\mathbf{H}_t = 
		\begin{bmatrix}
			1 & 0 & 0 &0 \\
			0 & 1 & 0 & 0
		\end{bmatrix}, \quad
		\mathbf{R}_t =
		\begin{bmatrix} 
			\sigma^2   & 0 \\
			0     &\sigma^2  
		\end{bmatrix}.
	\end{equation*}
	The transition matrix and the process noise covariance are 
	\begin{equation*}
		\mathbf{A}_t = 
		\begin{bmatrix}
			1 & 0 & \triangle t &0 \\
			0 & 1 & 0 &\triangle t \\
			0 & 0 & 1 & 0 \\
			0 & 0 & 0 & 1 
		\end{bmatrix},
		\mathbf{Q}_t = q_c
		\begin{bmatrix} 
			\frac{\Delta t^3 }{3} & 0 & \frac{\Delta t^2 }{2} & 0 \\
			0 & \frac{\Delta t^3 }{3}  & 0 & \frac{\Delta t^2 }{2}  \\
			\frac{\Delta t^2 }{2}  & 0 &  {\Delta t }  & 0 \\
			0 &  \frac{\Delta t^2 }{2}  & 0 & {\Delta t }
		\end{bmatrix}.
	\end{equation*}
	We have $\Delta t = 0.1$, $q_c = 0.5$, $\sigma = 0.3$, $T = 100$, $\mathbf{m}_1 =\begin{bmatrix}0.1& 0& 0.1& 0 \end{bmatrix}^\top$, and $\mathbf{P}_1$ is an identity matrix. We set the matrix $\mathbf{G}_{g,t}$ to an identity matrix and use the parameters $\gamma = 1$, $\mu=1$, and $K_{\max} = 50$.

	We define the estimation error as
	\begin{equation*}
		\mathbf{x}_{\text{err}}= \frac{\sum_{t=1}^{T} \| \mathbf{x}_t^{(k)} - \mathbf{x}_t^{\text{true}}  \|_2}{\sum_{t=1}^{T} \|\mathbf{x}_t^{\text{true}}\|_2},
	\end{equation*}
	where $\mathbf{x}_t^{\text{true}}$ is the ground truth. The estimation results are plotted in Fig.~\ref{fig:linear_car}, where the circles denote the noisy measurements and the blue dash line denotes the true state. As we can seen, the KS-mADMM estimate (black line) is much closer to the ground truth than the KS estimate (red dash line), which is also reflected by a lower error. 
	\begin{figure}[!htb]  
		\centerline{\includegraphics[width=0.9\columnwidth]
			{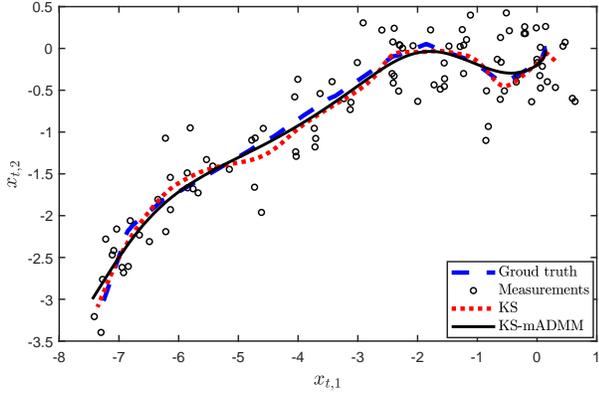}}
		\caption{Signals, measurements, and the estimates in the linear tracking problem. The values of $\mathbf{x}_{\text{err}}$ are $0.103$ and $0.072$ in KS and KS-mADMM, respectively.}
		\label{fig:linear_car}
	\end{figure}

	Recall that the difference in batch and recursive ADMM running time is dominated by the $\mathbf{x}_{1:T}$-subproblem. Fig.~\ref{fig:linear_large} demonstrates how the running time (sec) grows when $T$ is increasing. Despite being mathematically equivalent, \mbox{mADMM} and \mbox{KS-mADMM}, have very different running times. The running times of mADMM and prox-ADMM have a similar growth rate whereas  \mbox{KS-mADMM} has a growth rate that resembles a plain Kalman smoother.
	Due to limited memory, we cannot report the results of the batch estimation methods (prox-ADMM and mADMM) when $T > 10^4$. At $T = 10^4$, the running times of KS, KS-mADMM, prox-ADMM, and mADMM, were $0.34$s, $1.92$s, $6284$s and $9646$s, respectively. The proposed method is computationally inexpensive, which makes it suitable for solving real-world applications, such as the marine vessel tracking in Section~\ref{sec:linear_Marine}.
	\begin{figure}[!thb]  
		\centerline{\includegraphics[width=0.9\columnwidth]
			{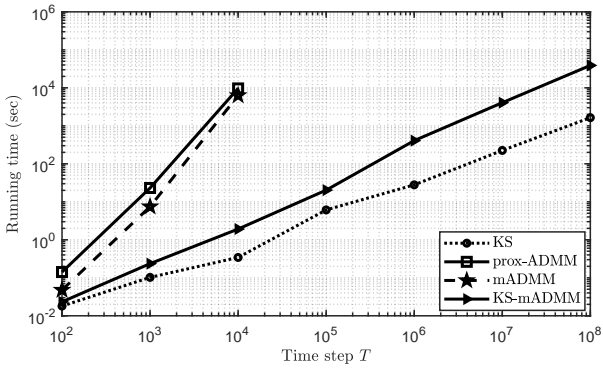}}
		\caption{Comparison of the running times in the linear car tracking example as function of the number of time steps.}
		\label{fig:linear_large}
	\end{figure}

	\subsection{Multi-Sensor Range Measurement Problems}
	\label{sec:nonlinear_car}
	In this experiment, we consider a multi-sensor range measurement problem where we have short periods of movement with regular stops. This problem frequently appears in many surveillance systems \cite{Bar-Shalom+Li+Kirubarajan:2001,Rogez2014Monocular}. The state $\mathbf{x}_t$ contains the position $(x_{t,1},x_{t,2})$ and the velocities $(x_{t,3},x_{t,4})$. The measurement dynamic model for sensor $n \in \left\{1,2,3\right\}$ is given by
	\begin{equation}
		\mathbf{h}_t^n(\mathbf{x}_t) 
		= 	
		\sqrt{(x_{t,2} - s_y^n)^2 + (x_{t,1} - s_x^n)^2}, \notag
	\end{equation}
	where $(s_x^n,s_y^n)$ is the position of the sensor $n$. The transition function $\mathbf{a}_t(\mathbf{x}_{t-1}) $ is 
	\begin{equation}
		\mathbf{a}_t(\mathbf{x}_{t-1})  =
		\begin{bmatrix}
			1 & 0 & \triangle t &0 \\
			0 & 1 & 0 &\triangle t \\
			0 & 0 & 1 & 0 \\
			0 & 0 & 0 & 1 
		\end{bmatrix} \, \mathbf{x}_{t-1}.	
	\end{equation}
	The covariances are $\mathbf{R}_t = \operatorname{diag}(0.2^2, 0.2^2)$, and $\mathbf{Q}_t = \operatorname{diag}(0.01, 0.01, 0.1, 0.1)$. We set $\Delta t = 0.1$, $T = 60$, $(s_x^1,\, s_y^1) = (0,-0.5)$, $(s_x^2,\, s_y^2) = (0.5,0.6)$, $(s_x^3,s_y^3) = (0.5,0.6)$, $\mathbf{m}_1=\begin{bmatrix}0 & 0 & 0& 0 \end{bmatrix}^\top$, and $\mathbf{P}_1 = \mathbf{I}/10$. We assume the target has many stops, which means the velocities $x_{t,3}$, $x_{t,4}$ are sparse. We also set
	$
	\mathbf{G}_{g,t} =
	\begin{bmatrix} 
		\mathbf{0} & \mathbf{I}
	\end{bmatrix}, 
	$
	and use the parameters $\gamma = 1$, $\mu=1$, $K_{\max} = 50$, and $I_{\max} = 5$.
	We plot the velocity variable $x_{t,3}$ corresponding to the time step $t$ in Fig.~\ref{fig:nonlinear_ct}, which indicates that our method (black line) can generate much more sparse results than the IEKS estimate (red dash line). 
	\begin{figure}[!htb]  
		\centerline{\includegraphics[width=0.9\columnwidth]
			{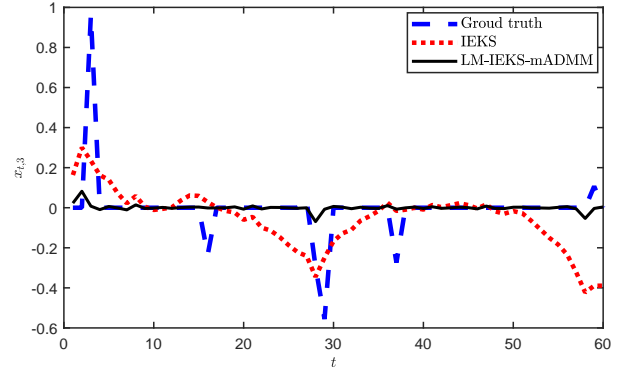}}
		\caption{The estimated trajectory in the nonlinear system. The relative errors are $0.53$ and $0.46$ generated by IEKS and LM-IEKS-ADMM.}
		\label{fig:nonlinear_ct}
	\end{figure}
	
	Fig.~\ref{fig:nonlinear_compared} shows the relative error $\mathbf{x}_{\text{err}}$ as a function of the iteration number. The values of $\mathbf{x}_{\text{err}}$ generated by the regularisation methods are below those generated by iterated extended Kalman smoother (IEKS) \cite{simo2013Bayesian}. It also shows that the \mbox{GN-mADMM}, \mbox{GN-IEKS-mADMM}, \mbox{LM-mADMM} and \mbox{LM-IEKS-mADMM} can find the optimal values in around $50$ iterations. IEKS is the fastest method, but the relative error is highest due to lack of the sparsity prior (i.e., $\mu = 0$). \mbox{GN-mADMM} and \mbox{GN-IEKS-mADMM} have the same convergence results (as they are equivalent), while the latter uses the less running time. Similarly, \mbox{LM-mADMM} and \mbox{LM-IEKS-mADMM} have the same convergence results, but \mbox{LM-IEKS-mADMM} needs less time to obtain the result than \mbox{LM-mADMM}. When the number of time steps $T$ is moderate, all the running time are acceptable. {But when $T$ is extremely large, the proposed methods provide a massive advantage.} 
	\begin{figure}[htb]  
		\centerline{\includegraphics[width=0.9\columnwidth]
			{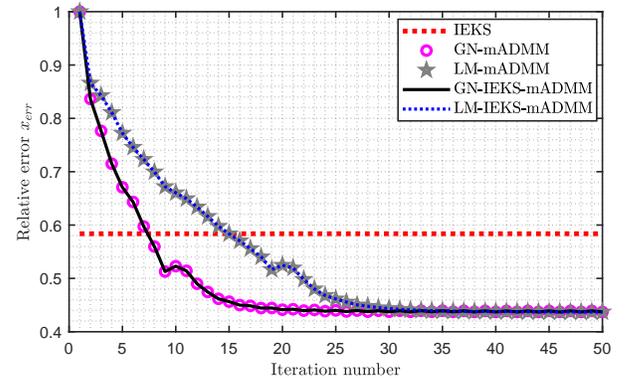}}
		\caption{Relative error $\mathbf{x}_{\text{err}}$ versus iteration number.}
		\label{fig:nonlinear_compared}
	\end{figure}

	Fig.~\ref{fig:nonlinear_large} demonstrates how the running time (sec) grows when $T$ is increasing. The proposed methods are compared with the state-of-the-art methods, including the proximal ADMM (prox-ADMM) \cite{Wright2006Numerical}, mADMM \cite{Boyd2011admm}, and IEKS \cite{simo2013Bayesian}. Despite being mathematically equivalent, \mbox{GN-mADMM} and \mbox{GN-IEKS-mADMM}, \mbox{LM-mADMM} and \mbox{LM-IEKS-mADMM}, have very different running times. \mbox{GN-IEKS-mADMM} and \mbox{LM-IEKS-mADMM} are more efficient than the batch methods. Due to limited memory, we cannot report the results of the batch methods when $T > 10^4$. It is reasonable to conclude that in general, the proposed methods are competitive for extremely large-scale tracking and estimation problems. The proposed approaches are computationally inexpensive, which makes them suitable for solving real-world applications, such as ship trajectory-tracking in the next section.
	\begin{figure}[!htb]  
		\centerline{\includegraphics[width=0.9\columnwidth]
			{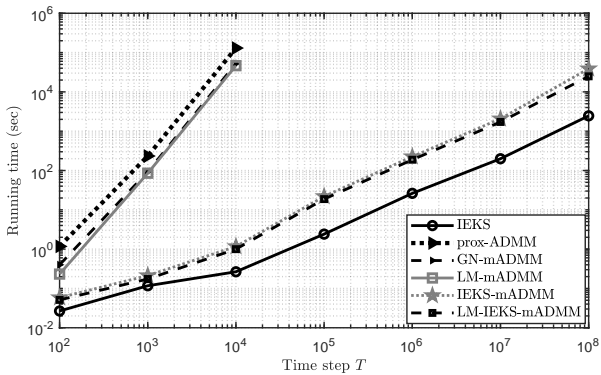}}
		\caption{Comparison of the running times in the range measurement example as function of the number of time steps (from $10^2$ to $10^8$).}
		\label{fig:nonlinear_large}
	\end{figure}

	\subsection{Marine Vessel Tracking}
	\label{sec:linear_Marine}
	
	In this experiment, we utilise the Wiener velocity model \cite{Bar-Shalom+Li+Kirubarajan:2001} with a sparse noise assumption to track a marine vessel trajectory. The latitude, longitude, speed, and course of the vessel have been captured by automatic identification system (AIS) equipment, collected by Danish Maritime Authority. Similar applications can be found in \cite{Ahmad2016Intent,McCall2016Integral}. The state of the ship is measured at time intervals of $1$ minute. Matrices $\mathbf{H}_t$,  $\mathbf{A}_t$, $\mathbf{Q}_t$, and $\mathbf{R}_t$ are the same with the settings in Section \ref{sec:linear_car} with $\Delta t = 1$, $q_c = 1$, $\sigma = 0.3$, $T = 100$, $\mathbf{m}_1=\begin{bmatrix}0.1 & 0.1 & 0 & 0 \end{bmatrix}^\top$, and $\mathbf{P}_1 = 100\mathbf{I}$. We assume the process noise $\mathbf{q}_t$ is sparse, and set $\mathbf{G}_{g,t}$ to an identity matrix and use the parameters $\gamma = 1$, $\mu=1$, and $K_{\max} = 100$. The measurement data consists of $100$ time points of the vessel locations. 
	
	Our method obtains the position (latitude and longitude) estimates as shown in Fig.~\ref{fig:linear_ship}. Fig.~\ref{fig:ship_error} shows that our method has sparser process noise than estimated by the Kalman smoother (KS) \cite{simo2013Bayesian}. We then highlight the computation advantage of our method. The difference in running time is dominated by the $\mathbf{x}_{1:T}$-subproblem. The running times of KS, prox-ADMM, mADMM, and KS-mADMM, were $0.34$s, $174$s, $172$s, and $5.63$s, respectively. The running times of mADMM and prox-ADMM are similar whereas \mbox{KS-mADMM} has a smaller running time that resembles the plain Kalman smoother.

	\begin{figure}[!htb]  
		\centerline{\includegraphics[width=0.9\columnwidth]
			{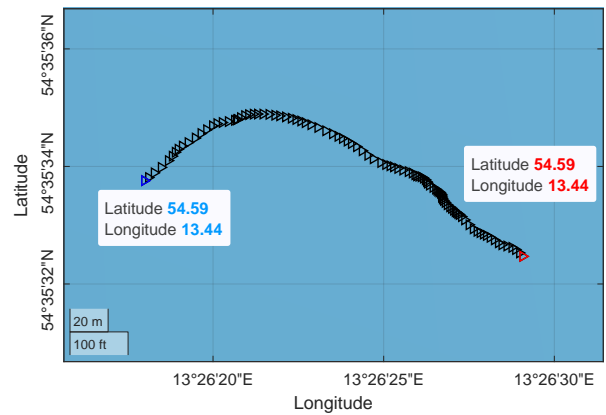}}
		\caption{The position (black markers) estimated by KS-mADMM. The starting coordinate is denoted blue marker, and the ending coordinate is red marker. Contains data from the Danish Maritime Authority that is used in accordance with the conditions for the use of Danish public data.}
		\label{fig:linear_ship}
	\end{figure}
	\begin{figure}[!htb]  
		\centerline{\includegraphics[width=0.9\columnwidth]
			{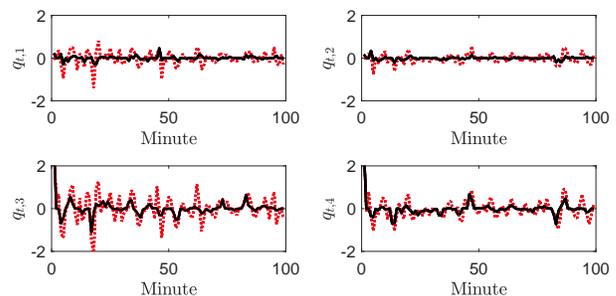}}
		\caption{The process noise estimated by KS-mADMM (black line) and Kalman smoother (red dash line).}
		\label{fig:ship_error}
	\end{figure}

	\subsection{Autonomous Vehicle Tracking}
	\label{sec:urban_car}
	To further show how our methods can speed up larger scale real-world problems, we apply GN-IEKS-mADMM to a vehicle tracking problem using real-world data. Global positioning system (GPS) data was collected in urban streets and roads around Helsinki, Tuusula, and Vantaa, Finland \cite{Jouni2011GPS}. The urban environment contained many stops to traffic lights, crossings, turns, and various other situations. We ran the experiment using a coordinate turn model \cite{simo2013Bayesian}, where the state at time step $t$ had the positions $(x_{t,1}, \,x_{t,2})$, the velocities $(x_{t,3}, \, x_{t,4})$, and the angular velocity $x_{t,5}$. The number of time points $T$ was $6865$. 
	We use the parameters $\gamma = 0.1$, $\mu = 1$, $K_{\max} = 300$, $I_{\max} =5$, $\mathbf{m}_1 = \begin{bmatrix}4.5 &13.5 & 0 & 0& 0 \end{bmatrix}^\top$, and $\mathbf{P}_1= \operatorname{diag}(50,50,50,50,0.01)$. We utilised the matrix
	\begin{equation*}
		\mathbf{G}_{g,t} = 
		\begin{bmatrix}
			0 & 0  & 1  &0 & 0\\
			0  & 0  & 0 & 1 & 0 \\
			0  & 0  & 0 & 0 & 1
		\end{bmatrix}, 
	\end{equation*}
	to enforce the sparsity of the velocities and the angular velocity. 
	
	The plot in Fig.~\ref{fig:car_map} demonstrates the path (black line) generated by our method. The running time of IEKS \cite{Aravkin2017Generalized}, GN-mADMM, and GN-IEKS-mADMM were $22$s, $13520$s and $2704$s, respectively. As we expected, although IEKS is fastest, the $L_2$-penalised regularisation methods push more of the velocities and the angle to zero, which is shown in Fig.~\ref{fig:car_sparsity}. The IEKS estimate has many large peaks that appear as a result of large residuals, and GN-IEKS-mADMM has more sparse results.
	\begin{figure}[!htb]  
		\centerline{\includegraphics[width=0.9\columnwidth]
			{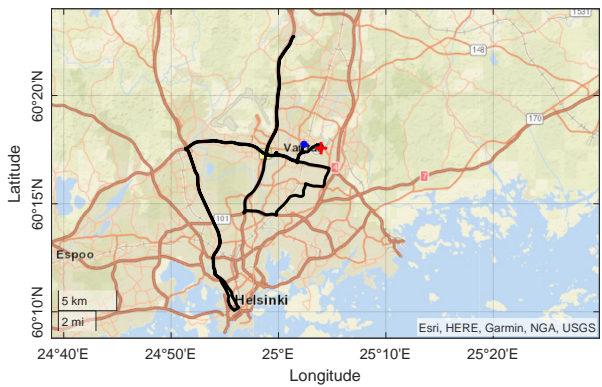}}
		\caption{The path tracking (black line) generated by GN-IEKS-mADMM. The starting position is blue point, and the ending position is red cross.}
		\label{fig:car_map}
	\end{figure}
	\begin{figure}[!thb]  
		\centerline{\includegraphics[width=0.9\columnwidth]
			{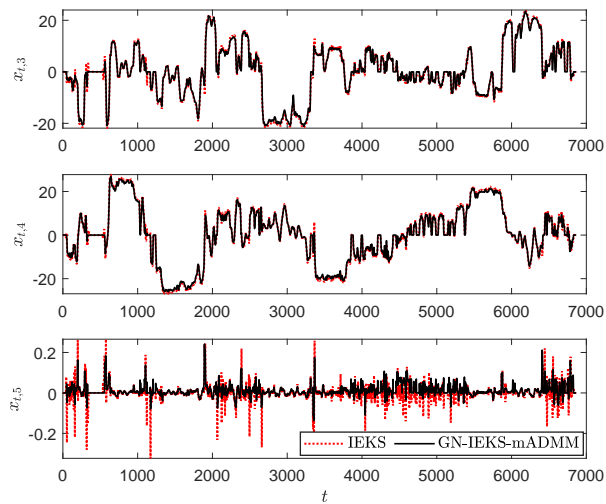}}
		\caption{The estimated velocities and angular velocities generated by IEKS (red dash line) and the proposed method (black line). }
		\label{fig:car_sparsity}
	\end{figure}
	
	\subsection{Audio Signal Restoration}
	\label{sec:audio}
	
	The proposed technique can be readily applied to the problem of noise reduction in audio signals. We adopt a Gabor regression model \cite{wolfe2004bayesian}: 
	\begin{equation} y(\tau) = \sum_{m=0}^{M/2} \sum_{n=0}^{N-1}  c_{m,n} g_{m,n}(\tau) + r(\tau), ~~~~\tau = 0, \dots, T-1\notag,
	\end{equation}
	where signals are represented as a weighted sum of Gabor atoms $g_{m,n}(\tau) =  ~w_n(\tau) \exp \left(2 \pi \mathrm{i} \frac{m}{M} \tau\right)$. Terms $w_n(\tau)$ correspond to a window function with bounded support centred at time instants $\tau_n$ (windows are placed so that the time axis is with tiled evenly). Sparsity is promoted through the $L_2^1$ pair-wise grouping pattern described in Section~\ref{sec:formulation}: $\sum_{m,n}\mu_{m,n}\| c_{m,n}\|_2$. The real representation of complex coefficients $c_{m,n}$ used in \cite{wolfe2004bayesian} is adopted. This batch problem is restated in terms of a state-space model: signal $\mathbf{y}$ is separated into $P$ chunks $\mathbf{y}_{t}$ of length $L$ and state vectors $\mathbf{x}_t = [\mathbf{c}_{2(t-1)}; \mathbf{c}_{2t-1}; \mathbf{c}_{2t}]^\top$ are defined, $\mathbf{c}_{t}$ being the subvector associated to each frame. Let $\mathbf{H}_0$ be a matrix containing the non-zero values of the Gabor basis functions $\mathbf{g}_{0,0}, \dots, \mathbf{g}_{M/2,0}$ as columns. Thus, atoms in subsequent frames are time-shifted replicas of this basic set and $\|\mathbf{y} - \mathbf{D}\mathbf{c}\|^2$ ($\mathbf{D}$ a \emph{dictionary} matrix containing all atoms) can be replaced by $\sum_{t=1}^P \|\mathbf{y}_t - \mathbf{H_\ast} \mathbf{x}_t\|^2$ $+\sum_{t=1}^P \| \mathbf{x}_t -  \mathbf{A}_t\mathbf{x}_{t-1}\|^2$, with $\mathbf{H}_\ast = \begin{bmatrix}\mathbf{H}_u & \mathbf{H}_0  &\mathbf{H}_\ell\end{bmatrix}$ and 
	$\mathbf{A}_t = \begin{bmatrix} \mathbf{0} & \mathbf{0} &  \mathbf{0}; \quad \mathbf{0} & \mathbf{0} & \mathbf{0}; \quad \mathbf{I} &  \mathbf{0} & \mathbf{0}  \end{bmatrix}$. Terms $\mathbf{H}_u, \mathbf{H}_\ell$ are truncated versions of $\mathbf{H}_0$ corresponding to the contribution of the adjacent overlapping frames. 
	
	The algorithm is tested on a $\sim$3~second long glockenspiel excerpt sampled at 22050~[Hz] and contaminated with artificial background noise with signal-to-noise ratio (SNR) $5 \text{dB}$. Experiments are carried out in an Intel Core i7 @ 2.50GHz, 16 GB RAM, with parameters $\gamma=5$, $\mu=2.6$, and $K_{\max} = 500$ and a window length $L = 512$. Kalman gain matrices are precomputed. Reflecting the power spectrum of typical audio signals, which decays with frequency, penalisation is made frequency-dependent by setting $\mu_{m,n} = \mu / f(m)$, with $f(m)$ a decreasing modulating function (e.g., a Butterworth filter gain), in a similar fashion to \cite{fevotte2007sparse}. Coefficients are initialised at zero.The average output SNR is $12.4$ with an average running time of $64.6$s in 20 realisations. Fig.~\ref{fig:audio_denoising} shows the visual reconstruction results.
	
	In comparison, Gibbs sampling schemes for models (e.g., \cite{wolfe2004bayesian}) yield noisier restorations with comparable computing times. We analysed the same example using the Gibbs sampler with 500 iterations, 250 burn-in period. Hyperparameters and initial values are chosen to ensure a fair comparison with the KS-mADMM method (unfavorable initialisation may induce longer convergence times). With a runtime of $\sim$180 seconds, the Gibbs algorithm yields an output SNR of $\sim$15 dB. The Perceptual Evaluation of Audio Quality (PEAQ) \cite{kabal2002examination}, a measure that incorporates psycho-acoustic criteria to asses audio signals, is adopted. The Objective Difference Grade (ODG) indicator derived from PEAQ is used to compare the reconstructions with respect to the clean reference signal, obtaining ODG = $-3.910$ for clean signal against noisy input, ODG = $-3.846$ for clean signal against Gibbs reconstruction, and ODG = $-3.637$ for clean signal against KS-mADMM reconstruction (the closer to 0, the better). Despite the lower SNR (12.4 dB), the KS-ADMM reconstruction sounds cleaner (i.e., has fewer audio artefacts) than its Gibbs counterpart, which is consistent with the ODG values obtained. Devising appropriate temporal evolution models for the audio synthesis coefficients over time and investigating self-adaptive schemes for the estimation of $\mu$ (here tuned empirically) are topics of future research.
	
	\begin{figure}[!htb]  
		\centerline{\includegraphics[width=0.85\columnwidth]
			{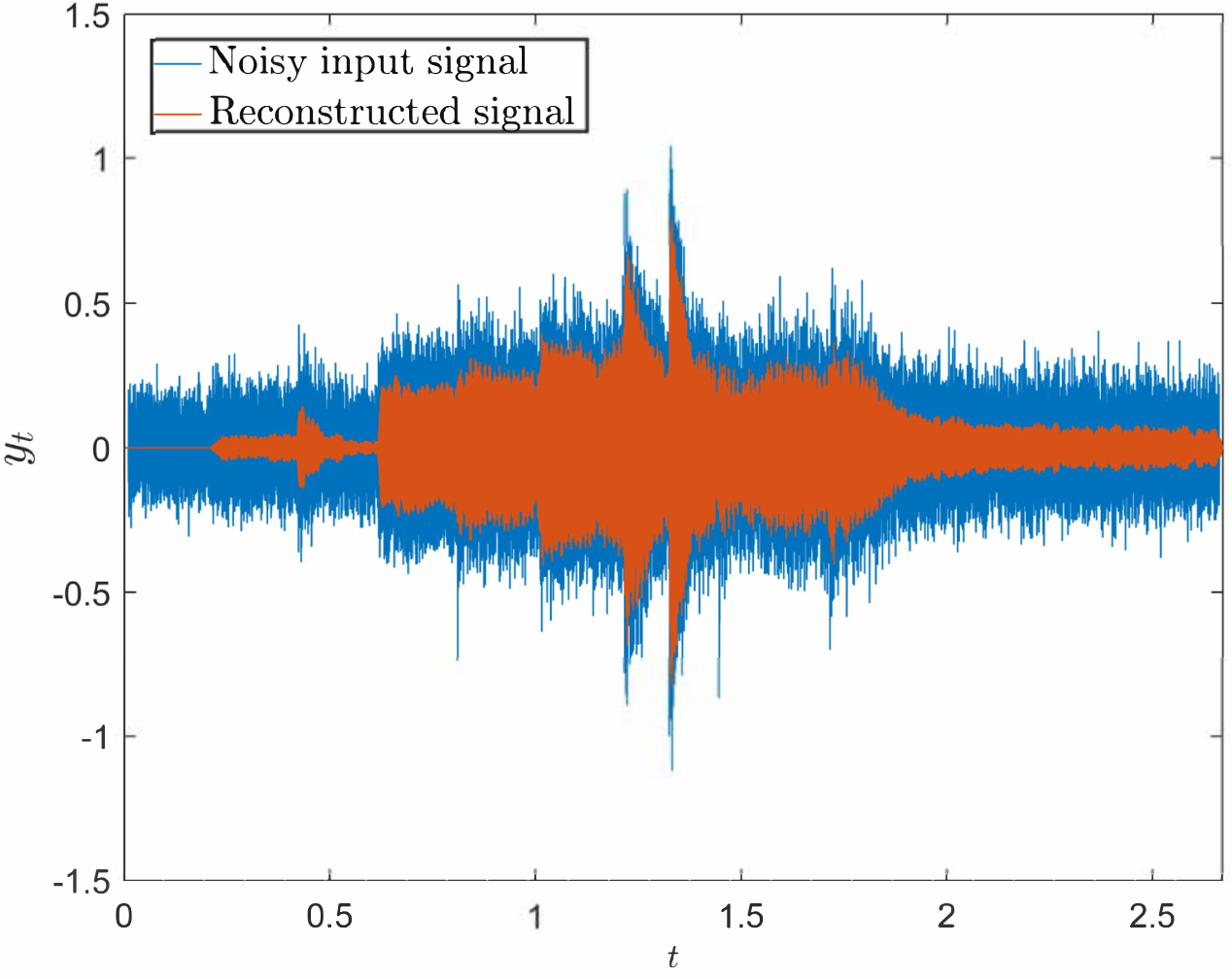}}
		\caption{Reconstructed glockenspiel excerpt.}
		\label{fig:audio_denoising}
	\end{figure}
	
	\section{Conclusion and Discussion}
	\label{sec:Conclusion}	
	In this paper, we have presented efficient smoothing-and-splitting methods for solving regularised autonomous tracking and state estimation problems. We formulated the problem as a generalised $L_2$-penalised dynamic group Lasso minimisation problem. The problem can be solved using batch methods when the number of time steps is moderate. For the case with a large number of time steps, new KS-mADMM, GN-IEKS-mADMM, and LM-IEKS-mADMM methods were developed. We also proved the convergence of the proposed methods. We applied the developed methods to simulated tracking, real-world tracking, and audio signal restoration problems, where methods resulted in improved localisation and estimation performance and significantly reduced computation load. 
	
	A disadvantage of the smoothing-and-splitting methods is that although the methods significantly improve the tracking and estimation performance, their reliability depends on user-defined penalty parameters (e.g., the parameter $\gamma$ in \eqref{eq:Lagrangian_general}). See \cite{Boyd2011admm,Wright2006Numerical} for further details on choosing the appropriate values of the parameters. The use of adaptive penalty parameters may improve the performance in dynamic systems, even while stronger conditions of convergence need to be guaranteed \cite{Zhou2019Adaptive}. It would be interesting to develop fully automated solvers with adaptive parameters. 
	The convergence and the convergence rate of our methods are based on Bayesian smoothers and ADMM, and we have established the convergence rate of the convex case. Possible future work includes discussing the convergence rate for nonconvex variants.
	
	Although we only consider the autonomous tracking and state estimation problems in this paper, it is possible to apply our framework to a wide class of control problems. For example, in linear optimal control problems, we could introduce splitting variables to decompose the nonsmooth terms and then use the Riccati equations to compute the subproblems arising in the optimal control problems \cite{Axehill2012Riccati}. In cooperative control of multiple target systems \cite{wen2020Sampled-Data,guo2020Asymptotic,guo2019Fuel-Efficient}, we can consider a reformulation of dynamic models of group targets into classes with different characteristics. Based on the framework, we address the subproblems in implementing optimisation-based methods such as receding horizon methods \cite{guo2020Asymptotic}. The proposed framework can be extended to other variable splitting methods \cite{Splitting2017book} as well as other recursive smoothers \cite{simo2013Bayesian}. Future work also includes developing other variants, for example, sigma-point based variable splitting methods.

	\appendices
	
	\section{Proof of Lemma~\ref{lemma:s_sequence}}
	\label{pf:s_sequence}
	
	For proving Lemma~\ref{lemma:s_sequence}, we define $\bm{\zeta} = \begin{bmatrix} \mathbf{x}& \mathbf{w}\end{bmatrix}^\top$ and then write variables $\mathbf{x}$ and $\mathbf{w}$ into the function $\Xi(\bm{\zeta})$, which is 
	\begin{equation} \label{eq:Xi_def}
		\Xi(\bm{\zeta}) =\frac{1}{2} \left \| {\mathbf{y}}- \mathbf{H} \mathbf{x} - \mathbf{e}\right \|_{\mathbf{R}^{-1}}^2  +\frac{1}{2} \left \| {\mathbf{m}}- \mathbf{\Phi} \mathbf{x} - \mathbf{b}\right \|_{\mathbf{Q}^{-1}}^2+\mu \left \| \mathbf{w}\right \|_2. 
	\end{equation}
	Using the optimality conditions of the subproblems in \eqref{eq:subproblem_dynamic}, we can write
	\begin{subequations}\label{eq:subproblem_linear_gradient} 
		\begin{align}
			\label{eq:z-primal-gradient}	
			&\Xi(\bm{\zeta})   - \Xi(\bm{\zeta}^{(k+1)}) + (\bm{\zeta} - \bm{\zeta}^{(k+1)})^\T
			\left(\begin{bmatrix}
				\mathbf{\Phi} & \mathbf{0}\\
				\mathbf{0}  & \mathbf{I} \end{bmatrix}^\top 
			\bm{\eta}^{(k)} \right.  \notag\\
			&\left. \quad + \gamma \begin{bmatrix}
				\mathbf{\Phi} & \mathbf{0}\\
				\mathbf{0}  & \mathbf{I} \end{bmatrix}^\top 
			\big(\bm{\zeta}^{(k+1)}- \begin{bmatrix}\mathbf{I}  \\ \mathbf{G} \end{bmatrix} \mathbf{v}^{(k)}  - \begin{bmatrix}
				\mathbf{d} \\
				\mathbf{0} \end{bmatrix}   \big) \right) 
			\geq \mathbf{0}, \\ 
			\label{eq:v-primal-gradient}
			&(\mathbf{v} - \mathbf{v}^{(k+1)})^\T
			\left(-\begin{bmatrix}
				\mathbf{I}  \\ \mathbf{G} \end{bmatrix}^\top \bm{\eta}^{(k)} \right. \notag\\ 
			&\left. \quad -\gamma \begin{bmatrix}
				\mathbf{I}  \\
				\mathbf{G}
			\end{bmatrix}^\top \big(\bm{\zeta}^{(k+1)}- \begin{bmatrix}
				\mathbf{I}  \\
				\mathbf{G}
			\end{bmatrix} \mathbf{v}^{(k+1)}  - \begin{bmatrix}
				\mathbf{d} \\
				\mathbf{0} \end{bmatrix}\big) \right) 
			\geq \mathbf{0}, \\ 
			\label{eq:eta-primal-gradient}
			&(\bm{\eta} -\bm{\eta}^{(k+1)} )^\T 
			\left(- \big(\begin{bmatrix}
				\mathbf{\Phi} & \mathbf{0}\\
				\mathbf{0}  & \mathbf{I} \end{bmatrix} \mathbf{z}^{(k+1)}
			- \begin{bmatrix}
				\mathbf{I}  \\
				\mathbf{G}
			\end{bmatrix} \mathbf{v}^{(k+1)} -\begin{bmatrix}
				\mathbf{d} \\
				\mathbf{0} \end{bmatrix}  \big) \right. \notag \\
			&\left.\quad + \frac{1}{\gamma}(\bm{\eta}^{(k+1)}  - \bm{\eta}^{(k)}  )   \right)
			\geq \mathbf{0}.
		\end{align}
	\end{subequations}
	For simplicity of notation, we also define
	\begin{equation}
		\begin{split}
			\begin{aligned}
				\bm{\xi} &= \begin{bmatrix}
					\bm{\zeta}&
					\mathbf{v}&
					\bm{\eta}
				\end{bmatrix}^\top,   \\
				F(\bm{\xi}) &= \begin{bmatrix}
					\begin{bmatrix}
						\mathbf{\Phi} & \mathbf{0}\\
						\mathbf{0}  & \mathbf{I} 
					\end{bmatrix}^\top \bm{\eta}^{(k+1)}   \\
					-\begin{bmatrix}
						\mathbf{I}  \\
						\mathbf{G}
					\end{bmatrix}^\top \bm{\eta}^{(k+1)} \\
					- \begin{bmatrix}
						\mathbf{\Phi} & \mathbf{0}\\
						\mathbf{0}  & \mathbf{I} 
					\end{bmatrix} \bm{\zeta}^{(k+1)}
					+ 
					\begin{bmatrix}
						\mathbf{I}  \\
						\mathbf{G}
					\end{bmatrix} \mathbf{v}^{(k+1)} +
					\begin{bmatrix}
						\mathbf{d}  \\
						\mathbf{0}
					\end{bmatrix} \\
				\end{bmatrix}.
			\end{aligned}
		\end{split}
	\end{equation} 
	We group all the variables $\bm{\zeta}$, $\mathbf{v}$, and $\bm{\eta}$ into a single vector $\bm{\xi}$, and then rewrite \eqref{eq:subproblem_linear_gradient} as follows:
	\begin{equation}\label{eq:subproblem_linear_gradient_3}
		\begin{split}
			\begin{aligned}
				&  \Xi(\bm{\zeta})   - \Xi(\bm{\zeta}^{(k+1)})   + 
				\begin{bmatrix}
					\bm{\xi} - \bm{\xi}^{(k+1)}   \\
				\end{bmatrix}^\T \\
				&\quad
				\left(
				F(\bm{\xi})
				+\gamma \begin{bmatrix}
					\begin{bmatrix}
						\mathbf{\Phi} & \mathbf{0}\\
						\mathbf{0}  & \mathbf{I} 
					\end{bmatrix}^\T  \\
					-\begin{bmatrix}
						\mathbf{I}  \\
						\mathbf{G}
					\end{bmatrix}^\T  \\
					\mathbf{0} \\
				\end{bmatrix} \begin{bmatrix}
					\mathbf{I}  \\
					\mathbf{G}
				\end{bmatrix}( \mathbf{v}^{(k)} - \mathbf{v}^{(k+1)}) \right. \\
				& \left. \qquad +
				\begin{bmatrix}
					&  \mathbf{0} &   \mathbf{0}\\
					&\gamma \begin{bmatrix}
						\mathbf{I}  \\
						\mathbf{G}
					\end{bmatrix}^\T &   \mathbf{0} \\
					& \mathbf{0}  &\frac{1}{\gamma}\mathbf{I} \\
				\end{bmatrix}
				\begin{bmatrix}
					\mathbf{v}^{(k+1)} - \mathbf{v}^{(k)}  \\
					\bm{\eta}^{(k+1)} -\bm{\eta}^{(k)} \\
				\end{bmatrix}
				\right) \geq  \mathbf{0}, \\ 
				&  \Xi(\bm{\zeta})   - \Xi(\bm{\zeta}^{(k+1)})   + 
				\begin{bmatrix}
					\bm{\xi} - \bm{\xi}^{(k+1)}   \\
				\end{bmatrix}^\T 
				F(\bm{\xi}) \\
				&\quad + \gamma  \begin{bmatrix}
					\bm{\xi} - \bm{\xi}^{(k+1)}   \\
				\end{bmatrix}^\T \begin{bmatrix}
					\begin{bmatrix}
						\mathbf{\Phi} & \mathbf{0}\\
						\mathbf{0}  & \mathbf{I} 
					\end{bmatrix}^\T  \\
					-\begin{bmatrix}
						\mathbf{I}  \\
						\mathbf{G}
					\end{bmatrix}^\T  \\
					\mathbf{0} \\
				\end{bmatrix} \begin{bmatrix}
					\mathbf{I}  \\
					\mathbf{G}
				\end{bmatrix} \big( \mathbf{v}^{(k)} - \mathbf{v}^{(k+1)}\big)  \\
				& \quad \geq 
				\begin{bmatrix}
					\mathbf{v} - \mathbf{v}^{(k+1)}  \\
					\bm{\eta} -\bm{\eta}^{(k+1)} \\
				\end{bmatrix}^\top
				\begin{bmatrix}
					&\gamma \begin{bmatrix}
						\mathbf{I}  \\
						\mathbf{G}
					\end{bmatrix}^\T &   \mathbf{0} \\
					& \mathbf{0}  &\frac{1}{\gamma}\mathbf{I} \\
				\end{bmatrix}
				\begin{bmatrix}
					\mathbf{v}^{(k+1)} - \mathbf{v}^{(k)}  \\
					\bm{\eta}^{(k+1)} -\bm{\eta}^{(k)} \\
				\end{bmatrix}.
			\end{aligned}
		\end{split}
	\end{equation}

	Since the mapping $F(\bm{\xi})$ is affine with a skew-symmetric matrix, it is monotonic \cite{He2012convergence}. Then we have the inequality 	
	\begin{equation}\label{eq:lemma_step_2}
		\begin{split}
			\begin{aligned}
				&\Xi(\bm{\zeta}^{(k+1)})  - \Xi(\bm{\zeta}^{\star})  + (\bm{\xi}^{(k+1)} - \bm{\xi}^{\star})^\T   F(\bm{\xi}^{(k+1)}) \\
				&\geq
				\Xi(\bm{\zeta}^{(k+1)})  - \Xi(\bm{\zeta}^{\star})  + (\bm{\xi}^{(k+1)} - \bm{\xi}^{\star})^\T   F(\bm{\xi}^\star) \geq \mathbf{0}.
			\end{aligned}
		\end{split}
	\end{equation}

	Meanwhile, using \eqref{eq:eta-primal}, the inequality can be written as
	\begin{equation}\label{eq:lemma_step_1}
		\begin{split}
			\begin{aligned}
				(\bm{\eta}^{(k)} -\bm{\eta}^{(k+1)} )^\T
				\left(- \begin{bmatrix}
					\mathbf{I}  \\
					\mathbf{G}\\
				\end{bmatrix} \right)
				(\mathbf{v}^{(k)} - \mathbf{v}^{(k+1)}) \geq  \mathbf{0}.
			\end{aligned} 
		\end{split}
	\end{equation}

	Combing \eqref{eq:lemma_step_2} and \eqref{eq:lemma_step_1}, we can derive
	\eqref{eq:subproblem_linear_gradient_3} as 
	\begin{equation}\label{eq:lemma_step_3}
		\begin{split}
			\begin{aligned}
				& \left(\begin{bmatrix}
					\mathbf{v}^{(k+1)}\\
					\bm{\eta}^{(k+1)}\\
				\end{bmatrix}  - \begin{bmatrix}
					\mathbf{v}^\star\\
					\bm{\eta}^\star\\
				\end{bmatrix}\right)^\T 
				\begin{bmatrix}
					\gamma \mathbf{I} + {\mathbf{G}}^\T {\mathbf{G}}  &  \mathbf{0}\\
					\mathbf{0}  &\frac{1}{\gamma}\mathbf{I} \\
				\end{bmatrix}
				\left(
				\begin{bmatrix}
					\mathbf{v}^{(k)}\\
					\bm{\eta}^{(k)}\\
				\end{bmatrix}  - \begin{bmatrix}
					\mathbf{v}^{(k+1)}\\
					\bm{\eta}^{(k+1)}\\
				\end{bmatrix}
				\right) \\ 
				&\geq
				\gamma  \begin{bmatrix}
					\bm{\xi}^{(k+1)} - \bm{\xi}^\star   \\
				\end{bmatrix}^\T \begin{bmatrix}
					\begin{bmatrix}
						\mathbf{\Phi} & \mathbf{0}\\
						\mathbf{0}  & \mathbf{I} 
					\end{bmatrix}^\T  \\
					-\begin{bmatrix}
						\mathbf{I}  \\
						\mathbf{G}
					\end{bmatrix}^\T  \\
					\mathbf{0} \\
				\end{bmatrix} \begin{bmatrix}
					\mathbf{I}  \\
					\mathbf{G}
				\end{bmatrix} \big( \mathbf{v}^{(k)} - \mathbf{v}^{(k+1)}\big)\\ 
				&\quad \geq (\bm{\eta}^{(k)} -\bm{\eta}^{(k+1)} )^\T 
				\left(- \begin{bmatrix}
					\mathbf{I}  \\
					\mathbf{G}\\
				\end{bmatrix}\right) \big(\mathbf{v}^{(k)} - \mathbf{v}^{(k+1)} \big)\geq \mathbf{0}. 
			\end{aligned} 
		\end{split}
	\end{equation}
	Let $\mathbf{\Omega} = \begin{bmatrix}
		\gamma \mathbf{I} + {\mathbf{G}}^\T {\mathbf{G}}  &  \mathbf{0}\\
		\mathbf{0}  &\frac{1}{\gamma}\mathbf{I} \\
	\end{bmatrix}$.	      
	We then conclude that
	\begin{equation}\label{eq:lemma_step_6}
		\begin{split}
			\begin{aligned}
				&\left\| \begin{bmatrix}
					\mathbf{v}^{(k)}\\
					\bm{\eta}^{(k)}\\
				\end{bmatrix}  - \begin{bmatrix}
					\mathbf{v}^\star\\
					\bm{\eta}^\star\\
				\end{bmatrix} \right\|_{\mathbf{\Omega}}^2  \\ 
				&=
				\left\| \begin{bmatrix}
					\mathbf{v}^{(k+1)}\\
					\bm{\eta}^{(k+1)}\\
				\end{bmatrix}  
				-  \begin{bmatrix}
					\mathbf{v}^\star\\
					\bm{\eta}^\star\\
				\end{bmatrix}
				\right\|_{\mathbf{\Omega}}^2 
				+ \left\| 
				\begin{bmatrix}
					\mathbf{v}^{(k)}\\
					\bm{\eta}^{(k)}\\
				\end{bmatrix}  - 
				\begin{bmatrix}
					\mathbf{v}^{(k+1)}\\
					\bm{\eta}^{(k+1)}\\
				\end{bmatrix}  
				\right\|_{\mathbf{\Omega}}^2
				\\
				&\,  
				+ 2 \left(  \begin{bmatrix}
					\mathbf{v}^{(k+1)}\\
					\bm{\eta}^{(k+1)}\\
				\end{bmatrix}  
				-  \begin{bmatrix}
					\mathbf{v}^\star\\
					\bm{\eta}^\star\\
				\end{bmatrix}
				\right)^\T \mathbf{\Omega} 
				\left(\begin{bmatrix}
					\mathbf{v}^{(k)}\\
					\bm{\eta}^{(k)}\\
				\end{bmatrix}  - 
				\begin{bmatrix}
					\mathbf{v}^{(k+1)}\\
					\bm{\eta}^{(k+1)}\\
				\end{bmatrix}  \right)\\
				&\geq \left\|\begin{bmatrix}
					\mathbf{v}^{(k+1)}\\
					\bm{\eta}^{(k+1)}\\
				\end{bmatrix}  - \begin{bmatrix}
					\mathbf{v}^\star\\
					\bm{\eta}^\star\\
				\end{bmatrix} \right\|_{\mathbf{\Omega}}^2
				+  	\left\|\begin{bmatrix}
					\mathbf{v}^{(k)}\\
					\bm{\eta}^{(k)}\\
				\end{bmatrix}  - 
				\begin{bmatrix}
					\mathbf{v}^{(k+1)}\\
					\bm{\eta}^{(k+1)} \\
				\end{bmatrix} \right\|_{\mathbf{\Omega}}^2.
			\end{aligned}
		\end{split}
	\end{equation}

	\section{Proof of Lemma 2}
	\label{pf:lemma-nonincreasing}
	To simplify the notation, we replace the $(k+1):$ iteration by the $+$:th iteration, and drop the iteration counter $k$ in this proof. Due to the strongly amenability, $s(\mathbf{x})$ is prox-regular with a positive constant $M$. 
	Now we compute 
	\begin{equation} \label{eq:lang_1}
		\begin{split}
			&  \mathcal{L}_{\gamma}(\mathbf{x},\mathbf{w},\mathbf{v};\bm{\eta}) - 
			\mathcal{L}_{\gamma}(\mathbf{x}^{(+)},\mathbf{w}^{(+)},\mathbf{v};\bm{\eta})  \\
			& =
			s(\mathbf{x}) - s(\mathbf{x}^{(+)}) + 
			\langle \overline{\bm{\eta}},  \mathbf{\Phi}\mathbf{x}^{(+)}- \mathbf{\Phi}\mathbf{x} \rangle \\
			&+
			\langle \gamma(\mathbf{\Phi} \mathbf{x}^{(+)}
			-\mathbf{d} - \mathbf{v} ),  \mathbf{\Phi}\mathbf{x}^{(+)}- \mathbf{\Phi}\mathbf{x} \rangle 
			+ \frac{\gamma}{2} \| \mathbf{\Phi}\mathbf{x}^{(+)}  - \mathbf{\Phi}\mathbf{x} \|^2 \\ 
			& \quad +
			g(\mathbf{w}) - g(\mathbf{w}^{(+)}) + 
			\langle \underline{\bm{\eta}},  \mathbf{w}^{(+)}-\mathbf{w} \rangle \\ 
			& \qquad +
			\langle\gamma ( \mathbf{w}^{(+)} - \mathbf{G} \mathbf{v} ),  \mathbf{w}^{(+)}- \mathbf{w}\rangle 
			+ \frac{\gamma}{2} \|\mathbf{w}^{(+)}  - \mathbf{w}\|^2
			\\
			&{>}  \frac{  \gamma  \delta_{+}(\mathbf{\Phi}^\T \mathbf{\Phi}) - M }{2} 
			\| \mathbf{x}^{(+)} - \mathbf{x} \|^2 + 
			\frac{\gamma}{2} \| \mathbf{w}^{(+)}- \mathbf{w} \|^2, 
		\end{split} 
	\end{equation} 
	where $\bm{\eta} = \operatorname{vec}( \overline{\bm{\eta}}, \underline{\bm{\eta}} )$.  We then have
	\begin{equation} \label{eq:q_case_a}
		\begin{split}
			\begin{aligned}
				& \mathcal{L}_{\gamma}(\mathbf{x}^{(+)},\mathbf{w}^{(+)},\mathbf{v}^{(+)};\bm{\eta}^{(+)})  -\mathcal{L}_{\gamma}(\mathbf{x},\mathbf{w},\mathbf{v};\bm{\eta}) 
				<   \frac{1}{\gamma} \| \bm{\eta}^{(+)}- \bm{\eta} \|^2  \\
				& +\frac{M - \gamma  \delta_{+}(\mathbf{\Phi}^\T \mathbf{\Phi}) }{2}
				\| \mathbf{x}^{(+)}- \mathbf{x} \|^2  
				+\frac{\gamma}{2} \| \mathbf{w}^{(+)} -  \mathbf{w} \|^2,
			\end{aligned}
		\end{split} 
	\end{equation}
	which will be nonnegative provided when $\gamma > \frac{M}{\delta_{+}(\mathbf{\Phi}^\T \mathbf{\Phi}) }> 0 $ is satisfied. In particular, when $\mathbf{\Phi} = \mathbf{I}$,  $ \delta_{+}(\mathbf{\Phi}^\T \mathbf{\Phi})  = 1$. 
	
	In our case, $\mathcal{L}_{\gamma}(\mathbf{x}^{(k)}, \mathbf{w}^{(k)},\mathbf{v}^{(k)}; \bm{\eta}^{(k)})$ is upper bounded by $\mathcal{L}_{\gamma}(\mathbf{x}^{(0)}, \mathbf{w}^{(0)},\mathbf{v}^{(0)}; \bm{\eta}^{(0)})$, and is also lower bounded by 
	$  
	\mathcal{L}_{\gamma}(\mathbf{x}^{(k)}, \mathbf{w}^{(k)}, \mathbf{v}^{(k)}; \bm{\eta}^{(k)}) \geq s(\mathbf{x}^{(k)})  +  \sum_{t = 1}^{T}\sum_{g = 1}^{N_g} \mu  \| \mathbf{w}_{g,t} \|_2. 
	$
	Thus, we get the conclusion.

	\section{Proof of Lemma 3}
	\label{pf:lemma-LM}
	
	We use the smallest non-zero eigenvalue of $ \mathbf{\Phi}^\T  \mathbf{\Phi} $ and $ \mathbf{S}^{-1} $ to yield the inequality
	\begin{equation}
		\begin{split}
			\begin{aligned}
				\left\|\mathbf{J}_{\theta}^\top  \mathbf{J}_{\theta}(\mathbf{x}^{(i)})  \right\|
				\geq \max \left\{\gamma \delta_{+}(\mathbf{\Phi}^\top \mathbf{\Phi}), \lambda^{(i)} \delta_{+}([\mathbf{S}^{(i)}]^{-1}) \right\}, 
			\end{aligned}
		\end{split}
	\end{equation}
	where $\mathbf{J}_{\theta} = \begin{bmatrix} \mathbf{R}^{-\frac{1}{2}} \mathbf{J}_{h}(\mathbf{x} ) &
		\mathbf{Q}^{-\frac{1}{2}} \mathbf{J}_{a}(\mathbf{x} ) &
		\gamma^{\frac{1}{2}} \mathbf{\Phi}&
		\lambda^{\frac{1}{2}} \mathbf{S}^{-\frac{1}{2}}
	\end{bmatrix}^\T$. 
	We then have
	\begin{equation}
		\label{eq:taylorbased_equality_lm}
		\begin{aligned}
			&  \left\| \mathbf{x}^{(i+1)} - \mathbf{x}^\star \right\|  
			\leq \frac{M}{2}  \left\| [\mathbf{J}_{\theta}^\T \mathbf{J}_{\theta}(\mathbf{x}^{(i )}) ] ^{-1}  \right\| 
			\left\| \mathbf{x}^{(i)} - \mathbf{x}^\star  \right\|^2  \\
			& \quad + 
			\left\|  [\mathbf{J}_{\theta}^\T \mathbf{J}_{\theta}(\mathbf{x}^{(i )}) ] ^{-1} \mathbf{H}_{\theta} (\mathbf{x}^{(i )}) \right\|
			\left\| \mathbf{x}^{(i)} - \mathbf{x}^\star \right\|.
		\end{aligned}
	\end{equation}
	When $\|\mathbf{H}_{\theta}(\mathbf{x}) \| \le \kappa $ and $\kappa \to 0$, the convergence is quadratic. The linear convergence can be established when the inequality
	$\left\|  [\mathbf{J}_{\theta}^\T \mathbf{J}_{\theta}(\mathbf{x}^{(i )}) ]^{-1} \mathbf{H}_{\theta} (\mathbf{x}^{(i )}) \right\| 
	\leq  {\kappa}/{\max \left\{\gamma \delta_{+}(\mathbf{\Phi}^\top \mathbf{\Phi}), \lambda^{(i)} \delta_{+}([\mathbf{S}^{(i)}]^{-1}) \right\}} < 1,
	$
	is satisfied.

	\ifCLASSOPTIONcaptionsoff
	\newpage
	\fi

	\bibliographystyle{IEEEtran}
	\bibliography{refs,refs2}

\end{document}